\let\counterwithin\relax
\newtheorem{assumption}{Assumption}
\newtheorem{example}{Example}
\newtheorem{definition}{Definition}
\newtheorem{lemma}{Lemma}
\newtheorem{theorem}{Theorem}
\newtheorem{corollary}{Corollary}
\newcommand{\bm}{\boldsymbol}
\newcommand{\bX}{\mathbf{X}}
\newcommand{\bbeta}{\boldsymbol{\beta}}
\newcommand{\btheta}{\boldsymbol{\theta}}
\newcommand{\bgamma}{\boldsymbol{\gamma}}
\numberwithin{equation}{section}
\begin{document}
\title{Quantile index regression}
\author{Yingying Zhang${}^a$, Yuefeng Si${}^b$, Guodong Li${}^b$ and Chil-Ling Tsai${}^c$
	\\ \textit{${}^a$East China Normal University, ${}^b$University of Hong Kong}
	\\ \textit{and ${}^c$University of California at Davis}}

\maketitle
	
\setlength{\parindent}{16pt}	
\setlength{\droptitle}{-6em}

\begin{abstract}
Estimating the structures at high or low quantiles has become an important subject  and attracted increasing attention across numerous fields.
However, due to data sparsity at tails, it usually is a challenging task to obtain reliable estimation, especially for high-dimensional data.
This paper suggests a flexible parametric structure to tails, and this enables us to conduct the estimation at quantile levels with rich observations and then to extrapolate the fitted structures to far tails.
The proposed model depends on some quantile indices and hence is called the quantile index regression.
Moreover, the composite quantile regression method is employed to obtain non-crossing quantile estimators, and this paper further establishes their theoretical properties, including asymptotic normality for the case with low-dimensional covariates and non-asymptotic error bounds for that with high-dimensional covariates. Simulation studies
and an empirical example are presented to illustrate the usefulness of the new model.

\end{abstract}

\textit{Keywords}: Asymptotic normality; High-dimensional analysis; Non-asymptotic property; Partially parametric model; Quantile regression.

\newpage
\section{Introduction}
Quantile regression proposed by  \cite{koenker1978regression} has been widely used across various fields such as biological science, ecology, economics, finance, and machine learning, etc.;
see, e.g., \cite{cade2003gentle}, \cite{yu2003quantile}, \cite{meinshausen2006quantile}, \cite{linton2017quantile} and
\cite{koenker2017review}.
More references on quantile
regression can be found in the books of \cite{Koenker2005} and \cite{davino2014}.
Quantile regression has also been studied for high-dimensional data; see, e.g., \cite{belloni2011}, \cite{wang2012quantile} and \cite{zheng2015globally}. On the other hand, due to practical needs, it is increasingly becoming a popular subject to estimate the structures at high or low quantiles, such as the risk of high loss for investments in finance \citep{Kuester_Mittnik_Paolella2006,Zheng_Zhu_Li_Xiao2018}, high tropical cyclone intensity and extreme waves  in climatology \citep{Elsner_Kossin_Jagger2008,Jagger_Elsner2008,lobeto2021future}, and low infant birth weights in medicine \citep{Abrevaya2001,chernozhukov2020quantile}. 
It hence is natural to make inference at these extreme quantiles for high-dimensional data, while this is still an open problem.

There are two types of approaches in the literature to model the structures at tails.
The first one is based on the conditional distribution function (CDF) of the response $Y$ for a given set of covariates $\bm{X}$, and it is usually assumed to have a semiparametric structure at tails; see, e.g., Pareto-type structures in \cite{Beirlant_Goegebeur2004} and \cite{wang2009tail}.
While this method cannot provide conditional quantiles in explicit forms.
Later, \cite{noufaily2013parametric} considered a full parametric form, the generalized gamma distribution, to the CDF and then inverted the fitted distribution into a conditional quantile distribution.
However, as indicated in \cite{racine2017nonparametric}, indirect inverse-CDF-based estimators may not be efficient in tail regions when the data has unbounded support.

The second approach is extremal quantile regression, which combines quantile regression with extreme value
theory to estimate the conditional quantile at a very high or low level of $\tau_n$, which satisfies
$(1-\tau_n)=O(n)$ with $n$ being the sample size; see \cite{chernozhukov2005extremal}.
Specifically, this is a two-stage approach: (i.) performing the estimation at intermediate quantiles $\tau_n^*$
with $(1-\tau_n^*)^{-1}=o(n)$; and (ii.) extrapolating the fitted quantile structures to those at extreme quantiles by assuming the extreme value index that is associated to the tails of conditional distributions; see \cite{wang2012estimation} and \cite{wang2013estimation} for details.
The key of this method is to make use of the feasible estimation at intermediate levels since there are relatively more observations around these levels. 
However, intermediate quantiles are also at the far tails, and the corresponding data points may still not be rich enough for the case with many covariates.

In order to handle the case with high-dimensional covariates, along the lines of extremal quantile regression, this paper suggests to conduct estimation at quantile levels with much richer observations, say some fixed levels around $\tau_0$, and then extrapolate the estimated results to extreme quantiles by fully or partially assuming a form of conditional distribution or quantile functions on $[\tau_0,1]$.
Note that there exist many quantile functions, which have explicit forms, such as the generalized lambda and Burr XII distributions \citep{Gilchrist2000}. Especially the generalized lambda distribution can provide a very accurate approximation to some Pareto-type and extreme value distributions, as well as some commonly used distributions such as Gaussian distribution \citep{Vasicek1976,Gilchrist2000}.
These flexible parametric forms can be assumed to the quantile function on $[\tau_0,1]$, and the drawback of inverting a distribution function hence can be avoided.

Specifically, for a predetermined interval ${\mathcal I}\subset[0,1]$, the quantile function of response $Y$ is assumed to have an explicit form of $Q(\tau,\bm{\theta})$ for each level $\tau\in \mathcal I$, up to unknown parameters or indices $\bm{\theta}$. By further letting $\bm{\theta}$ be a function of covariates $\bX$, we then can define the conditional quantile function as follows:
\begin{equation}\label{Model-1}	
	Q_{Y}(\tau|\bm X) =\inf\{y: F_{Y}(y|\bm X)\ge\tau\} =Q(\tau,\btheta(\bm X)), \hspace{5mm}\tau\in \mathcal{I},
\end{equation}
where $F_{Y}(\cdot|\bm X)$ is the distribution of $Y$ conditional on $\bX$,  and $\bm{\theta}(\bX)$ is a $d$-dimensional parametric function.
Note that $\bm{\theta}(\bX)$ can be referred to $d$ indices, and model \eqref{Model-1} can then be called the quantile index regression (QIR) for simplicity.
In practice, to handle high quantiles, we may take $\mathcal I=[\tau_0,1]$ with a fixed value of $\tau_0$ and then conduct a composite quantile regression (CQR) estimation for model \eqref{Model-1} at levels within $\mathcal {I}$ but with richer observations.
Subsequently, the fitted QIR model can be used to predict extreme quantiles. 
More importantly, since the estimation is conducted at fixed quantile levels, there is no difficulty to handle the case with high-dimensional covariates.
In addition, comparing with the aforementioned two types of approaches in the literature, the proposed method can not only estimate quantile regression functions effectively, but also forecast extreme quantiles directly. Finally, the QIR model naturally yields non-crossing quantile regression estimators since its quantile function is nondecreasing with respect to $\tau$.

The proposed model is introduced in details at Section 2, and the three main contributions can be summarized below:
\begin{itemize}
	\item [(a)] When conducting the CQR estimation, we encounter the first challenge on model identification, and this problem has been carefully studied for the flexible Tukey lambda distribution in Section 2.2.
	\item [(b)] Section 2.2 also derives the asymptotic normality of CQR estimators for the case with low-dimensional covariates. This is a challenging task since the corresponding objective function is non-convex and non-differentiable, and we overcome the difficulty by adopting the bracketing method in \cite{pollard1985new}.
	\item [(c)] Section 2.3 establishes non-asymptotic properties of a regularized high-dimensional estimation. This is also not trivial due to the problem at (b).
\end{itemize}

The rest of this paper is organized as follows. 
Section 3 discusses some implementation issues in searching for these estimators. Numerical studies, including simulation experiments and a real analysis, are given in Sections 4 and 5, and Section 6 provides a short conclusion and discussion. All technical details are relegated to the Appendix.

For the sake of convenience, this paper denotes vectors and matrices by boldface letters, e.g., $\mathbf{X}$ and $\mathbf{Y}$, and denotes scalars by regular letters, e.g., $X$ and $Y$. In addition, for any two real-valued sequences $\{a_n\}$ and $\{b_n\}$, denote $a_n \gtrsim b_n$ (or $a_n\lesssim b_n$) if there exists a constant $c$ such that $a_n \geq c b_n$ (or $a_n \leq c b_n$) for all $n$, and denote $a_n \asymp b_n$ if $a_n \gtrsim b_n$ and $a_n \lesssim b_n$.
For a generic vector $\bX$ and matrix $\mathbf{Y}$, let $\|\bX\|$, $\|\bX\|_1$ and $\|\mathbf{Y}\|_{\textrm{F}}$ represent the Euclidean norm, $\ell_1$-norm and Frobenius norm, respectively.

\section{Quantile index regression}\label{sec:2}

\subsection{Quantile index regression model}\label{sec:qir}
Consider a response $Y$ and a $p$-dimensional vector of covariates $\bm X=(X_1,...,X_p)^{\prime}$.
We then rewrite the quantile function of $Y$ conditional on $\bm X$ at \eqref{Model-1} with an explicit form of $\btheta(\bm X,\bm \beta)$ below,
\begin{equation}\label{Model2}
Q_{Y}(\tau|\bm X) =Q(\tau,\btheta(\bm X,\bm \beta)), \hspace{5mm}\tau\in\mathcal{I},
\end{equation}
where $\mathcal{I}\subset[0,1]$ is an interval or the union of multiple disjoint intervals, 
$\btheta(\bm X,\bm \beta)=(\theta_1(\bm X,\bm \beta),\cdots,\theta_d(\bm X,\bm \beta))^\prime$, $\bm\beta=(\bm\beta_1^\prime,...,\bm\beta_d^\prime)^{\prime}$,
$\bm\beta_j=(\beta_{j1},\cdots,\beta_{jp})^{\prime}$,
$\theta_j(\bm X,\bm \beta)=g_j(\bm X^\prime \bm\beta_j)$, the link functions $g_j^{-1}(\cdot)$s are monotonic for $1\leq j\leq d$,   and the intercept term can be included by letting $X_1=1$.
We call model \eqref{Model2} the quantile index regression (QIR) for simplicity, and the two examples of $Q(\cdot,\cdot)$ below are first introduced to illustrate the new model.

\begin{example}\label{example2}
	Consider the location shift model, $Q(\tau, \theta)=\theta+Q_{\Phi}(\tau)$, for all $\tau\in[\tau_0,1]$, where $\tau_0\in(0,1)$ is a fixed level, $\theta\in\mathbb{R}$ is the location index and $Q_{\Phi}(\tau)$ is the quantile function of standard normality.
	Under the identity link function, $\theta(\bX,\bbeta)=\bX'\bbeta$, we can construct a linear quantile regression model at $\tau_0$. Then, after estimation, we can make a prediction at any level of $\tau\in(\tau_0,1)$.
\end{example}

\begin{example}\label{TukeyLambda}
	Consider the Tukey lambda distribution \citep{Vasicek1976} that is defined by its quantile function,
	\[
	Q(\tau,\btheta)=\theta_{1}+{\theta_{2}}\frac{\tau^{\theta_{3}}-(1-\tau)^{\theta_{3}}}{\theta_{3}},
	\]
	where $\btheta=(\theta_{1},\theta_2,\theta_{3})'$, and $\theta_1\in\mathbb{R}$, $\theta_2>0$ and $\theta_3\le 1$ are the location, scale and tail indices, respectively.
Due to its flexibility, the Tukey lambda distribution can provide an accurate approximation to many commonly used distributions, such as normal, logistic, Weibull, uniform, and Cauchy distributions, etc.; see \cite{Gilchrist2000}.
It is then expected to have a better performance when model \eqref{Model2} is combined with this distribution.
\end{example}


In the literature, there exist many  quantile functions, which have explicit forms, such as the generalized lambda and Burr XII distributions \citep{Gilchrist2000,fournier2007estimating}.
For example, the generalized lambda distribution has the form of
\[
Q(\tau,\btheta)=\theta_{1}+{\theta_{2}}\left( \frac{\tau^{\theta_{3}}-1}{\theta_{3}}-\frac{(1-\tau)^{\theta_{4}}-1}{\theta_{4}}\right),
\]
where the indices $\theta_3$ and $\theta_4$ control the right and left tails, respectively. Note that  it reduces to the Tukey lambda distribution when $\theta_3=\theta_4$.
This indicates that the generalized lambda distribution can be considered if we focus on the quantiles with the full range, i.e. $\mathcal{I}=[0,1]$. On the other hand, the Tukey lambda may be a better choice if our interest is on the quantiles at one side only, i.e. $\mathcal{I}\subset[0,0.5)$ or $\mathcal{I}\subset(0.5,1]$.

\subsection{Low-dimensional composite quantile regression estimation} \label{sec:model-ld}

Denote the observed data by $\{(Y_i,\bm X_i^{\prime})^{\prime},i=1,...,n\}$, and they are independent and identically
distributed ($i.i.d.$) samples of random vector $(Y,\bm X)$, where $Y_i$ is the response,  $\bm X_i=(X_{i1},...,X_{ip})^\prime$ contains $p$ covariates, and $n$ is the number of observations.

Let $\tau_{1}\le\tau_{2}\le\ldots\le\tau_{K}$ be $K$ fixed quantile levels, where $\tau_{k}\in\mathcal{I}$ for all $1\le k\le K$.
To achieve higher efficiency, we consider the composite quantile regression (CQR) estimator below.
\begin{equation}\label{eq:CQR}
\widehat{\bbeta}_n=\arg\min_{\bbeta}L_n(\bbeta) \hspace{2mm}\text{and}\hspace{2mm} L_n(\bbeta)=\sum_{k=1}^{K}\sum_{i=1}^{n}\rho_{\tau_{k}}\{Y_{i}-Q(\tau_{k},\btheta(\bX_{i},\bbeta))\},
\end{equation}
where $\rho_{\tau}(x)=x\{\tau-I(x<0)\}$ is the quantile check function; see \cite{Zou_Yuan2008} and \cite{kai2010local}. 
Note that  $Q(\tau,\btheta(\bX,\widehat{\bbeta}_{n}))$ has the non-crossing property with respect to $\tau$  since, for each $\btheta$, $Q(\tau,\btheta)$ is a non-decreasing quantile function.

To study the asymptotic properties of $\widehat{\bbeta}_n$, we first consider $\bm\beta_0=(\bm\beta_{01}^\prime,...,\bm\beta_{0d}^\prime)^{\prime}=\arg\min_{\bbeta}\bar{L}(\bbeta)$, where $ \bar{L}(\bbeta)=E[\sum_{k=1}^{K}\rho_{\tau_{k}}\{Y-Q(\tau_{k},\btheta(\bX,\bbeta))\}]$ is the population loss function, and it may not be unique, i.e. the CQR estimation at \eqref{eq:CQR} may suffer from the identification problem.
For the sake of illustration, let us
consider the  case without including covariates $\bX$, i.e. we estimate $\btheta$
with a sequence of quantile  levels $\tau_{k}\in\mathcal{I}$ and  $1\le k\le K$.
To this end, it requires that two different values of $\btheta$ can not yield the same quantile function $Q(\tau_{k},\btheta)$ across all $K$  levels. In other words, if there exists $\btheta\ne \btheta^{*}$ that yield $Q(\tau_{k},\btheta)=Q(\tau_{k},\btheta^{*})$ for all $K$ quantiles, then
$\btheta$ and $\btheta^{*}$ are not identifiable.
In sum, to guarantee that $\bbeta_{0}$ is the unique minimizer of the population loss, we make the following assumption on the quantile function $Q(\tau,\btheta)$.

\begin{assumption}\label{assum:identification}
For any two index vectors $\bm\theta_1\neq \bm\theta_2$, there exists at least one $1\le k\le K$ such that   $Q(\tau_k,\bm\theta_1)\neq Q(\tau_k,\bm\theta_2)$.
\end{assumption}

Intuitively, for any quantile function $Q(\tau,\btheta)$, one can always increase the number of quantile levels $K$ to make Assumption \ref{assum:identification} hold. However, it may also depend on the structures of quantile functions and the number of indices.
For the sake of illustration, we state the sufficient and necessary condition for the Tukey lambda distribution that satisfies Assumption \ref{assum:identification}.

\begin{lemma}\label{lemma:tukey}
For the Tukey lambda distribution defined in Example \ref{TukeyLambda}, we have that
 (i) for $\tau_{k}\in(0,1)$ with $1\le k\le K$, Assumption \ref{assum:identification} holds if $K\ge 4$;
(ii) for $\tau_{k}\in(0.5,1)$ (or $\tau_{k}\in(0,0.5)$) with $1\le k\le K$, Assumption \ref{assum:identification} holds if and only if $K\ge 3$.
\end{lemma}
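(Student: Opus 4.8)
The plan is to recast the statement as a contrapositive and turn it into a zero-counting problem for the difference of two Tukey-lambda quantile curves, which I will control through the monotonicity of a single auxiliary ratio. Write $h(\tau,\theta_3)=\{\tau^{\theta_3}-(1-\tau)^{\theta_3}\}/\theta_3$ (extended by continuity at $\theta_3=0$ to $\log\{\tau/(1-\tau)\}$), so that $Q(\tau,\btheta)=\theta_1+\theta_2 h(\tau,\theta_3)$. We may assume the $\tau_k$ are distinct, since repeated levels impose identical constraints. Suppose $\btheta_1=(\theta_{11},\theta_{12},\theta_{13})'\neq\btheta_2=(\theta_{21},\theta_{22},\theta_{23})'$ satisfy $Q(\tau_k,\btheta_1)=Q(\tau_k,\btheta_2)$ for all $k$, and set $f(\tau)=Q(\tau,\btheta_1)-Q(\tau,\btheta_2)$, which then vanishes at every $\tau_k$. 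A direct differentiation gives $f'(\tau)=\theta_{12}\psi(\tau,\theta_{13})-\theta_{22}\psi(\tau,\theta_{23})$, where $\psi(\tau,c):=\partial_\tau h(\tau,c)=\tau^{c-1}+(1-\tau)^{c-1}>0$; this identity holds for all $c\le1$ including $c=0$, which conveniently sidesteps the apparent singularity of $h$ at $\theta_3=0$. Note that $\psi(\cdot,c)$ is symmetric about $\tau=1/2$.

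I would first dispose of the easy case $\theta_{13}=\theta_{23}=:c$. Here $f(\tau)=(\theta_{11}-\theta_{21})+(\theta_{12}-\theta_{22})\,h(\tau,c)$, and since $\partial_\tau h=\psi(\cdot,c)>0$ the map $h(\cdot,c)$ is strictly increasing, hence injective; two vanishing points of $f$ at distinct levels then force $\theta_{11}=\theta_{21}$ and $\theta_{12}=\theta_{22}$, so $\btheta_1=\btheta_2$, a contradiction. Thus in this sub-case even $K\ge2$ suffices, and everything reduces to the genuinely hard case $a:=\theta_{13}\neq b:=\theta_{23}$.

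The crux is the key lemma that $\tau\mapsto\psi(\tau,a)/\psi(\tau,b)$ is strictly monotone on $(1/2,1)$. Granting it, three consequences follow. First, the ratio is non-constant, so $f'\not\equiv0$ (otherwise $\theta_{12}\psi(\cdot,a)\equiv\theta_{22}\psi(\cdot,b)$ would make it constant) and hence $f\not\equiv0$. Second, the zeros of $f'$ in $(1/2,1)$ are the solutions of $\psi(\tau,a)/\psi(\tau,b)=\theta_{22}/\theta_{12}$, of which strict monotonicity permits at most one; by Rolle, $f$ has at most two zeros in $(1/2,1)$, so $K\ge3$ distinct one-sided levels are impossible, giving the sufficiency in part (ii) (the interval $(0,1/2)$ is handled by the reflection $h(\tau,c)=-h(1-\tau,c)$). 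Third, because $\psi(\cdot,a)/\psi(\cdot,b)$ is even about $1/2$ and strictly monotone on each half with a strict extremum at $1/2$, each of its level sets meets $(0,1)$ in at most two points; hence $f'$ has at most two zeros on $(0,1)$ and, again by Rolle, $f$ has at most three, so $K\ge4$ distinct levels in $(0,1)$ are impossible, giving part (i).

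For the necessity in (ii), given distinct one-sided levels $\tau_1<\tau_2$ I would exhibit a counterexample with $K=2$: fix $a\ne b$, put $\theta_{12}=1$ and $\theta_{22}=\{h(\tau_1,a)-h(\tau_2,a)\}/\{h(\tau_1,b)-h(\tau_2,b)\}$, which is positive because $h$ is increasing in $\tau$, and then choose the two locations so that both curves share the common value at $\tau_1$; the resulting vectors differ in the tail index, so $K=2$ fails. The \emph{main obstacle} is the key lemma itself, i.e.\ strict monotonicity of $\psi(\tau,a)/\psi(\tau,b)$, equivalently non-vanishing on $(1/2,1)$ of the Wronskian $\partial_\tau\psi(\cdot,a)\,\psi(\cdot,b)-\psi(\cdot,a)\,\partial_\tau\psi(\cdot,b)$. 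I expect to prove it by showing that $c\mapsto\partial_\tau\log\psi(\tau,c)=(c-1)\{\tau^{c-2}-(1-\tau)^{c-2}\}/\{\tau^{c-1}+(1-\tau)^{c-1}\}$ is strictly monotone in $c$ for each fixed $\tau\in(1/2,1)$, so that its value at $a$ minus its value at $b$ keeps a constant sign. The substitution $x=(1-\tau)/\tau\in(0,1)$ turns this into a single-variable inequality in $x$, where the constraints $\tau>1/2$ (so $x<1$) and $c\le1$ pin down the signs of the relevant factors; verifying that sign condition is the one honestly technical computation in the argument.
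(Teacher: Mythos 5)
Your proposal takes essentially the same route as the paper: both recast non-identifiability as a zero-counting problem for a function in the linear span of $\{1,\,h(\cdot,a),\,h(\cdot,b)\}$ (the paper's $F(\tau)$ is just such a combination built from the constraint equations, your $f(\tau)$ is the direct difference of the two quantile curves), both apply Rolle's theorem once, and both rest on the identical key lemma — strict monotonicity of $\tau\mapsto\{\tau^{a-1}+(1-\tau)^{a-1}\}/\{\tau^{b-1}+(1-\tau)^{b-1}\}$ on each half of $(0,1)$, which is exactly the paper's Lemma A.1 and whose proof the paper likewise defers to a supplement. Your write-up is in fact slightly more complete than the paper's: you dispose of the equal-tail-index case explicitly and you supply the $K=2$ counterexample establishing the ``only if'' half of part (ii), which the paper covers only with the remark that (ii) ``can be proved similarly.''
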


Assumption \ref{assum:identification}, together with an additional assumption on covariates $\bX$, allows us to show that $\bbeta_{0}$ is the unique minimizer of $\bar{L}(\bbeta)$; see the following theorem, which is critical to establish the asymptotic properties of $\widehat{\bbeta}_{n}$.

\begin{theorem}\label{thm-identification}
	Suppose that $E(\bm X\bm X^{\prime})$ is finite and positive definite. If Assumption \ref{assum:identification} holds, then $\bm\beta_0$ is the unique minimizer of $\bar{L}(\bbeta)$.
\end{theorem}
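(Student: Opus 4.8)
The plan is to combine the pointwise optimality of the quantile check loss with the two identification ingredients—Assumption \ref{assum:identification} and positive definiteness of $E(\bX\bX')$. First I would condition on $\bX$ and write $\bar L(\bbeta)=\sum_{k=1}^{K}E\{G_k(\bX,\bbeta)\}$, where $G_k(\bX,\bbeta)=E[\rho_{\tau_k}\{Y-Q(\tau_k,\btheta(\bX,\bbeta))\}\mid \bX]$. The key elementary fact is that, for each fixed $\bX$, the scalar map $c\mapsto E[\rho_{\tau_k}(Y-c)\mid\bX]$ is minimized exactly at the conditional $\tau_k$-quantiles of $Y$ given $\bX$; hence $G_k(\bX,\bbeta)\ge G_k^{\min}(\bX)$ for every $\bbeta$, where $G_k^{\min}(\bX)$ denotes the minimal value of that scalar map, with equality iff $Q(\tau_k,\btheta(\bX,\bbeta))$ is itself a conditional $\tau_k$-quantile. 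Consequently $\bar L$ is bounded below by $\sum_k E\{G_k^{\min}(\bX)\}$, and any minimizer must attain this bound.

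Next I would quantify the gap. Using the representation $G_k(\bX,\bbeta)-G_k^{\min}(\bX)=\int_{b_k}^{a_k}\{F_Y(s|\bX)-\tau_k\}\,ds\ge 0$, where $a_k=Q(\tau_k,\btheta(\bX,\bbeta))$ and $b_k$ is the conditional $\tau_k$-quantile (so $F_Y(b_k|\bX)=\tau_k$), monotonicity of $F_Y(\cdot|\bX)$ makes the integrand have the sign of $s-b_k$, which gives nonnegativity. For a minimizer $\bbeta^{*}$ we have $\sum_k E\{G_k(\bX,\bbeta^{*})-G_k^{\min}(\bX)\}=0$ with every summand a nonnegative random variable, so each vanishes almost surely. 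Invoking the standard regularity that the conditional density of $Y$ is positive at the relevant levels (so $F_Y(\cdot|\bX)$ is strictly increasing there and the conditional quantile is unique), the integral can be zero only if $a_k=b_k$; applying this to both $\bbeta^{*}$ and $\bbeta_0$—which is likewise a minimizer—yields $Q(\tau_k,\btheta(\bX,\bbeta^{*}))=Q(\tau_k,\btheta(\bX,\bbeta_0))$ almost surely, for every $k$.

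Finally I would cascade these quantile equalities down to the coefficients. The contrapositive of Assumption \ref{assum:identification}, applied at almost every realization of $\bX$, turns equality of $Q(\tau_k,\cdot)$ across all $k$ into $\btheta(\bX,\bbeta^{*})=\btheta(\bX,\bbeta_0)$ a.s. Since $\theta_j(\bX,\bbeta)=g_j(\bX'\bbeta_j)$ with $g_j$ monotone and hence injective, the $j$-th coordinate gives $\bX'\bbeta_j^{*}=\bX'\bbeta_{0j}$ a.s.; squaring and taking expectations, $(\bbeta_j^{*}-\bbeta_{0j})'E(\bX\bX')(\bbeta_j^{*}-\bbeta_{0j})=0$, and positive definiteness of $E(\bX\bX')$ forces $\bbeta_j^{*}=\bbeta_{0j}$ for each $j$, i.e. $\bbeta^{*}=\bbeta_0$.

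The main obstacle is the middle step—converting the single scalar identity $\bar L(\bbeta^{*})=\bar L(\bbeta_0)$ into the pointwise, level-by-level quantile equalities. This hinges on both the nonnegative decomposition above and a smoothness/positive-density condition that makes conditional quantiles unique; without it, flat stretches of $F_Y(\cdot|\bX)$ could admit a distinct minimizer, so this regularity (presumably carried over from the assumptions used for asymptotic normality) is what the argument genuinely relies on beyond Assumption \ref{assum:identification} and the design condition.
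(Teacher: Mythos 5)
Your proposal is correct and is essentially the paper's own argument: the paper likewise reduces everything to showing $Q(\tau_k,\btheta(\bX,\bbeta^*))=Q(\tau_k,\btheta(\bX,\bbeta_0))$ almost surely for every $k$ --- it does so via Knight's identity \eqref{knight}, whose conditional expectation is exactly your integral gap $\int_{b_k}^{a_k}\{F_Y(s|\bX)-\tau_k\}\,ds$ --- and then passes to $\bbeta^*=\bbeta_0$ through Assumption \ref{assum:identification}, injectivity of the links, and positive definiteness of $E(\bX\bX')$. You have also correctly located the only delicate step: the paper's proof simply asserts that the vanishing of the nonnegative Knight term forces $V^{(k)}=0$ a.s. The one point to adjust is that the positive-density condition you invoke is neither among the theorem's hypotheses nor actually necessary. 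If $a_k<b_k$, the gap is strictly positive with no regularity at all, since $F_Y(s|\bX)<\tau_k$ for every $s<b_k$ by the inf-definition of the conditional quantile in \eqref{Model-1}. If $a_k>b_k$, a zero gap would require $F_Y(\cdot|\bX)$ to be flat at height $\tau_k$ on $(b_k,a_k)$, which would make $\tau\mapsto Q_Y(\tau|\bX)$ jump at $\tau_k$; but model \eqref{Model2} says this map coincides on $\mathcal{I}$ with the parametric quantile function $\tau\mapsto Q(\tau,\btheta(\bX,\bbeta_0))$, which is continuous and strictly increasing in $\tau$ (e.g.\ the Tukey lambda with $\theta_2>0$), so no such flat stretch can occur as long as $\mathcal{I}$ extends to the right of $\tau_k$. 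Thus the uniqueness of the conditional quantile that you wanted comes from the model structure itself, and your proof goes through without strengthening the theorem's hypotheses.
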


To demonstrate the consistency of $\widehat{\bbeta}_{n}$ given below, we  assume that the parameter space $\Theta\subset\mathbb{R}^{dp}$ is compact, and the true parameter vector $\bm\beta_0$ is an interior point of $\Theta$.

\begin{theorem}\label{theorem1}
	Suppose that
	$ E\{\max_{1\leq k\leq K}\sup_{\bbeta\in\Theta}\|\partial Q(\tau_{k},\btheta(\bX,\bbeta))/\partial\bbeta\|\}<\infty. $
	If the conditions of Theorem \ref{thm-identification} hold, then $\widehat{\bbeta}_n\rightarrow \bbeta_{0}$ in probability as $n\rightarrow \infty$.
\end{theorem}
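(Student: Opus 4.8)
The plan is to treat $\widehat{\bbeta}_n$ as a standard M-estimator and invoke the classical consistency argument for extremum estimators, namely to combine (i) a uniform law of large numbers (ULLN) over the compact set $\Theta$ with (ii) the well-separatedness of $\bbeta_0$ as the unique minimizer of the population loss, which is exactly what Theorem \ref{thm-identification} supplies. Since the excerpt emphasizes that $L_n$ is non-convex, I would deliberately avoid any convexity shortcut and instead establish convergence uniformly over all of $\Theta$.

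First I would recenter the objective so as not to impose moment conditions on $Y$. Define $D_n(\bbeta)=n^{-1}\{L_n(\bbeta)-L_n(\bbeta_0)\}$ and $\bar D(\bbeta)=\bar L(\bbeta)-\bar L(\bbeta_0)$. Then $\widehat{\bbeta}_n$ still minimizes $D_n$, while Theorem \ref{thm-identification} makes $\bbeta_0$ the unique minimizer of $\bar D$ with $\bar D(\bbeta_0)=0$ and $\bar D(\bbeta)>0$ otherwise. The recentered summand (denote it $\delta(\cdot,\bbeta)$) is a difference of check functions, and because $\rho_{\tau}$ is Lipschitz with constant $\max(\tau,1-\tau)\le 1$, each term is bounded by $\sum_{k=1}^{K}|Q(\tau_k,\btheta(\bX,\bbeta))-Q(\tau_k,\btheta(\bX,\bbeta_0))|$ uniformly in $Y$. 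This bound is free of the response, which sidesteps any integrability requirement on $Y$.

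Next, applying the mean value theorem to $Q(\tau_k,\btheta(\bX,\cdot))$ and invoking the stated moment bound $E\{\max_{1\le k\le K}\sup_{\bbeta\in\Theta}\|\partial Q(\tau_k,\btheta(\bX,\bbeta))/\partial\bbeta\|\}<\infty$, I would show that $\delta(\cdot,\bbeta)$ is Lipschitz in $\bbeta$ with an integrable Lipschitz constant $B(\bX)=K\max_{k}\sup_{\bbeta\in\Theta}\|\partial Q(\tau_k,\btheta(\bX,\bbeta))/\partial\bbeta\|$. This has two consequences: $\bar D$ is (Lipschitz) continuous, and, by a standard covering argument over the compact $\Theta$ combined with the pointwise law of large numbers at the centers of a finite $\epsilon$-net, the class $\{\delta(\cdot,\bbeta):\bbeta\in\Theta\}$ is Glivenko--Cantelli, so that $\sup_{\bbeta\in\Theta}|D_n(\bbeta)-\bar D(\bbeta)|\to 0$ in probability.

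Finally I would close the argument in the usual way. Since $\Theta$ is compact, $\bar D$ is continuous, and $\bbeta_0$ is the unique minimizer, for every $\epsilon>0$ the quantity $\eta:=\inf\{\bar D(\bbeta):\bbeta\in\Theta,\ \|\bbeta-\bbeta_0\|\ge\epsilon\}$ is strictly positive (the infimum is attained on a compact set where $\bar D>0$). Writing $\bar D(\widehat{\bbeta}_n)\le 2\sup_{\bbeta\in\Theta}|D_n(\bbeta)-\bar D(\bbeta)|$, obtained from $D_n(\widehat{\bbeta}_n)\le D_n(\bbeta_0)$ and $\bar D(\bbeta_0)=0$, forces $\bar D(\widehat{\bbeta}_n)\to 0<\eta$ in probability, hence $\|\widehat{\bbeta}_n-\bbeta_0\|<\epsilon$ with probability tending to one. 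The main obstacle is the ULLN: without convexity it is indispensable, and the real work lies in converting the derivative moment condition into an integrable envelope/Lipschitz constant so that the function class is Glivenko--Cantelli; the well-separation and the final squeeze are then routine bookkeeping.
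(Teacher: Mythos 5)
Your proof is correct and takes essentially the same route as the paper: both recenter the loss at $\bbeta_0$, use the Lipschitz property of the check function (the paper via Knight's identity) together with the derivative moment condition to get an integrable Lipschitz envelope, establish uniform convergence of $n^{-1}\{L_n(\bbeta)-L_n(\bbeta_0)\}$ over the compact $\Theta$, and close with the well-separation/squeeze argument supplied by Theorem \ref{thm-identification}. The only cosmetic difference is that the paper invokes Corollary 2 of \cite{andrews1987consistency} for the uniform law of large numbers, whereas you prove that step by hand with an $\epsilon$-net plus pointwise LLN argument; the substance is identical.
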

\noindent
The moment condition assumed in the above theorem allows us to adopt the uniform consistency theorem of  \cite{andrews1987consistency} in our technical proofs. To show the asymptotic distribution of $\widehat{\bbeta}_{n}$, we introduce two additional  assumptions given below.

\begin{assumption}\label{assum:EE}
	For all $1\le k\le K$,
	$$E\left\|\frac{\partial Q(\tau_{k},\btheta(\bX,\bbeta_{0}))}{\partial\bbeta}\right\|^{3}<\infty \hspace{5mm}\text{and}\hspace{5mm}
	E\sup_{\bbeta\in\Theta}\left\|\frac{\partial^{2} Q(\tau_{k},\btheta(\bX,\bbeta))}{\partial\bbeta\partial\bbeta^{\prime}}\right\|_{\mathrm{F}}^{2}<\infty.$$
\end{assumption}

\begin{assumption}\label{assum:density}
	The conditional density $f_{Y}(y|\bm X)$ is bounded and continuous uniformly for all $\bm X$.
\end{assumption}
\noindent
Assumption \ref{assum:EE} is used to prove Lemma \ref{lem-thm3} in the Appendix; see also \cite{zhu2011global}.
Assumption \ref{assum:density} is commonly used in the literature of quantile regression  \citep{Koenker2005,Belloni_conditionalprocess}, and it can be relaxed by providing more complicated and lengthy technical details \citep{kato2012asymptotics,chernozhukov2015quantile,galvao2016smoothed}.

Denote
\[
\Omega_{0}=\sum_{k'=1}^{K}\sum_{k=1}^{K}\min\{\tau_{k},\tau_{k'}\}\left(1-\max\{\tau_{k},\tau_{k'}\}\right)E\left[\frac{\partial Q(\tau_{k},\btheta(\bX,\bbeta_{0}))}{\partial\bbeta}\frac{\partial Q(\tau_{k'},\btheta(\bX,\bbeta_{0}))}{\partial\bbeta^{\prime}}\right]
\]
and
\[
\Omega_{1}=\sum_{k=1}^{K}E\left[f_{Y}\left\{Q(\tau_{k},\btheta(\bX,\bbeta_{0}))|\bX\right\}\frac{\partial Q(\tau_{k},\btheta(\bX,\bbeta_{0}))}{\partial\bbeta}\frac{\partial Q(\tau_{k},\btheta(\bX,\bbeta_{0}))}{\partial\bbeta^{\prime}}\right].
\]

\begin{theorem}\label{theorem2}
	Suppose that Assumptions \ref{assum:EE} and \ref{assum:density} hold, and $\Omega_{1}$ is positive definite. If the conditions of Theorem \ref{theorem1} are satisfied, then $\sqrt{n}(\widehat{\bbeta}_n-\bbeta_{0})\rightarrow N(\bm{0},\Omega_{1}^{-1}\Omega_{0}\Omega_{1}^{-1})$ in distribution as $n\rightarrow\infty$.
\end{theorem}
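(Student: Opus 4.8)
The plan is to treat $\widehat\bbeta_n$ as the minimizer of a non-smooth, non-convex criterion and to follow the empirical-process route of \cite{pollard1985new}. Since Theorem \ref{theorem1} already gives $\widehat\bbeta_n\to\bbeta_0$ in probability, I would reparametrize by $\bm u=\sqrt n(\bbeta-\bbeta_0)$ and study the recentered objective
\[
G_n(\bm u)=\sum_{k=1}^{K}\sum_{i=1}^{n}\left[\rho_{\tau_k}\{u_{ik}(\bbeta_0+\bm u/\sqrt n)\}-\rho_{\tau_k}\{u_{ik}(\bbeta_0)\}\right],
\]
where $u_{ik}(\bbeta)=Y_i-Q(\tau_k,\btheta(\bX_i,\bbeta))$. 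Writing $\widehat{\bm u}_n=\sqrt n(\widehat\bbeta_n-\bbeta_0)=\arg\min_{\bm u}G_n(\bm u)$, it suffices to show that $G_n$ admits a quadratic approximation whose minimizer is asymptotically normal.

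The engine is Knight's identity
\[
\rho_\tau(u-v)-\rho_\tau(u)=-v\{\tau-I(u<0)\}+\int_0^v\{I(u\le t)-I(u\le 0)\}\,dt,
\]
applied with $v=v_{ik}:=Q(\tau_k,\btheta(\bX_i,\bbeta_0+\bm u/\sqrt n))-Q(\tau_k,\btheta(\bX_i,\bbeta_0))$. A second-order Taylor expansion of $Q$ in $\bbeta$ gives $v_{ik}=n^{-1/2}\bm u'\partial Q(\tau_k,\btheta(\bX_i,\bbeta_0))/\partial\bbeta+O_p(n^{-1})$, with the quadratic Taylor term controlled by $E\sup_{\bbeta}\|\partial^2 Q/\partial\bbeta\partial\bbeta'\|_{\mathrm F}^2<\infty$ in Assumption \ref{assum:EE}. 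Summing the linear part of Knight's identity yields $-\bm u'\bm W_n$ with
\[
\bm W_n=\frac{1}{\sqrt n}\sum_{k=1}^{K}\sum_{i=1}^{n}\{\tau_k-I(u_{ik}(\bbeta_0)<0)\}\frac{\partial Q(\tau_k,\btheta(\bX_i,\bbeta_0))}{\partial\bbeta}.
\]
Because $\bbeta_0$ solves the population problem, each summand is conditionally mean-zero given $\bX_i$, so $\bm W_n$ is a normalized sum of i.i.d.\ mean-zero vectors; the moment bound $E\|\partial Q/\partial\bbeta\|^3<\infty$ supplies the Lyapunov condition and gives $\bm W_n\to N(\bm 0,\Omega_0)$. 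A short computation of the conditional covariance of $\tau_k-I(u_{ik}<0)$ and $\tau_{k'}-I(u_{ik'}<0)$, using $Q(\tau_k,\cdot)\le Q(\tau_{k'},\cdot)$ for $\tau_k\le\tau_{k'}$, equals $\min\{\tau_k,\tau_{k'}\}(1-\max\{\tau_k,\tau_{k'}\})$, which is exactly how $\Omega_0$ arises.

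The delicate term is the integral remainder. Taking conditional expectations and invoking the bounded, continuous conditional density of Assumption \ref{assum:density}, one has $E\big[\int_0^{v_{ik}}\{I(u_{ik}\le t)-I(u_{ik}\le 0)\}\,dt\,\big|\,\bX_i\big]=\tfrac12 f_Y\{Q(\tau_k,\btheta(\bX_i,\bbeta_0))|\bX_i\}\,v_{ik}^2+o(v_{ik}^2)$ (here $u_{ik}$ denotes $u_{ik}(\bbeta_0)$), which after substituting the expansion of $v_{ik}$ and applying the law of large numbers converges to $\tfrac12\bm u'\Omega_1\bm u$. The main obstacle is to show that the remainder process concentrates on this conditional mean uniformly over $\bm u$ in compact sets: the integrand involves non-smooth indicators and $G_n$ is non-convex, so the standard convexity lemma is unavailable. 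This is precisely where the bracketing device of \cite{pollard1985new} enters — I would bound the variance of the centered remainder, of order $n^{-1}$ for each fixed $\bm u$, and upgrade pointwise to uniform convergence via a bracketing/chaining maximal inequality, establishing stochastic equicontinuity.

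Combining the pieces gives $G_n(\bm u)=-\bm u'\bm W_n+\tfrac12\bm u'\Omega_1\bm u+o_p(1)$ uniformly on compacts. Since $\Omega_1$ is positive definite, this quadratic lower bound together with consistency yields $\sqrt n$-tightness of $\widehat{\bm u}_n$, so one may confine attention to a large compact ball on which the uniform approximation holds and on whose boundary the objective exceeds its value at the origin. The argmin continuous-mapping theorem then gives $\widehat{\bm u}_n=\Omega_1^{-1}\bm W_n+o_p(1)$, and Slutsky's theorem delivers $\sqrt n(\widehat\bbeta_n-\bbeta_0)\to N(\bm 0,\Omega_1^{-1}\Omega_0\Omega_1^{-1})$. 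I expect the binding difficulty to be the uniform control of the non-smooth remainder under non-convexity — the step handled by bracketing — together with the accompanying tightness of the minimizer established without recourse to convexity.
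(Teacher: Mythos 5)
Your proposal follows essentially the same route as the paper's proof: Knight's identity splits the recentered objective into a linear score term (CLT $\Rightarrow N(\bm{0},\Omega_0)$ with exactly the covariance computation you describe), a conditional-expectation term yielding the quadratic $\tfrac12\bm{u}'\Omega_1\bm{u}$, and centered/second-order remainders controlled by Pollard's bracketing device, after which a quadratic-comparison argument identifies $\sqrt{n}(\widehat\bbeta_n-\bbeta_0)=\Omega_1^{-1}n^{-1/2}T_n+o_p(1)$. The only presentational difference is that the paper states its key lemma with scale-adaptive error terms $o_p(\sqrt{n}\|u_n\|+n\|u_n\|^2)$ valid for any random sequence $u_n=o_p(1)$, so the $\sqrt{n}$-tightness you flag as the remaining difficulty falls out directly from $S_n(\widehat\bbeta_n)\le 0$ and consistency, rather than requiring a uniform-on-compacts expansion plus a separate tightness argument.
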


Note that the objective function $L_n(\bbeta)$ is non-convex and non-differentiable, and this makes it challenging to establish the asymptotic normality of $\widehat{\bbeta}_n$.
We overcome the difficulty by making use of the bracketing method in \cite{pollard1985new}. 
Moreover, to estimate the asymptotic variance in Theorem \ref{theorem2}, we first apply the nonparametric method in
\cite{Hendricks_Koenker1991} to estimate the quantities of $f_{Y}\left\{Q(\tau_{k},\btheta(\bX,\bbeta_{0}))|\bX\right\}$ with $1\leq k\leq K$, and then matrices $\Omega_{0}$ and $\Omega_{1}$ can be approximated by the sample averaging.

In addition, based on  the estimator $\widehat{\bbeta}_{n}$,  one can use $Q(\tau^{*},\btheta(\bX,\widehat{\bbeta}_{n}))$ to further predict the conditional quantile at any level $\tau^{*}\in\mathcal{I}$, and the corresponding theoretical justification can be established by directly applying the delta-method \citep[Chapter 3]{vanderVaart1998}. 
\begin{corollary}
 	Suppose that the conditions of Theorem \ref{theorem2} are satisfied. Then, for any $\tau^*\in\mathcal{I}$,
 	$$\sqrt{n}\{Q(\tau^*,\btheta(\bX,\widehat{\bbeta}_{n}))-Q(\tau^*,\btheta(\bX,\bbeta_0))\}\rightarrow N(\bm{0}, \bm{\delta}^{\prime}\Omega_{1}^{-1}\Omega_{0}\Omega_{1}^{-1}\bm{\delta})$$ in distribution as $n\rightarrow\infty$, where $\bm{\delta}=E[\partial Q(\tau^*,\btheta(\bX,\bbeta_{0}))/\partial\bbeta]\in\mathbb{R}^{dp}$.
\end{corollary}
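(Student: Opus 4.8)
The plan is to obtain this corollary as a direct application of the delta method to the limit law already established in Theorem \ref{theorem2}. Fix $\tau^{*}\in\mathcal{I}$ and define the prediction functional $\phi(\bbeta)=Q(\tau^{*},\btheta(\bX,\bbeta))$. Theorem \ref{theorem2} supplies $\sqrt{n}(\widehat{\bbeta}_{n}-\bbeta_{0})\to N(\bm{0},\Omega_{1}^{-1}\Omega_{0}\Omega_{1}^{-1})$, so once $\phi$ is shown to be differentiable at $\bbeta_{0}$ the delta method \citep[Chapter 3]{vanderVaart1998} gives
$$\sqrt{n}\{\phi(\widehat{\bbeta}_{n})-\phi(\bbeta_{0})\}\to N\big(\bm{0},\,\nabla\phi(\bbeta_{0})^{\prime}\Omega_{1}^{-1}\Omega_{0}\Omega_{1}^{-1}\nabla\phi(\bbeta_{0})\big),$$
which is exactly the asserted conclusion once $\nabla\phi(\bbeta_{0})$ is identified with $\bm{\delta}$.

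The first step is to verify differentiability and compute the gradient. Because $Q(\tau^{*},\cdot)$ has an explicit closed form that is smooth in its index argument for every fixed $\tau^{*}$ bounded away from the endpoints of $[0,1]$ (e.g.\ the Tukey lambda form in Example \ref{TukeyLambda}), and because $\theta_{j}(\bX,\bbeta)=g_{j}(\bX^{\prime}\bbeta_{j})$ with differentiable link $g_{j}$, the chain rule gives, for each block $j$, $\partial\phi/\partial\bbeta_{j}=(\partial Q/\partial\theta_{j})\,g_{j}^{\prime}(\bX^{\prime}\bbeta_{j})\,\bX$, so that $\nabla\phi(\bbeta_{0})=\partial Q(\tau^{*},\btheta(\bX,\bbeta_{0}))/\partial\bbeta$. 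Taking the prediction target to be the covariate-averaged quantile and bringing the derivative inside the expectation—legitimate under the first-moment bound of Assumption \ref{assum:EE} by dominated convergence—identifies the limiting gradient with $\bm{\delta}=E[\partial Q(\tau^{*},\btheta(\bX,\bbeta_{0}))/\partial\bbeta]$, which completes the variance formula.

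Since $\widehat{\bbeta}_{n}$ is already known to be $\sqrt{n}$-consistent and asymptotically normal, the argument is essentially immediate and I do not expect a substantial obstacle. The only points requiring care are (i) confirming that the explicit quantile function remains smooth at the particular $\tau^{*}$ of interest, which holds provided $\tau^{*}$ does not coincide with a boundary at which the tail index forces the derivative to blow up, and (ii) justifying the interchange of differentiation and expectation in the definition of $\bm{\delta}$, for which the integrability in Assumption \ref{assum:EE} is exactly what is needed. Both are routine given the assumptions carried over from Theorem \ref{theorem2}.
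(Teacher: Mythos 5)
Your proposal takes essentially the same route as the paper, whose entire justification for this corollary is a one-line appeal to the delta method \citep[Chapter 3]{vanderVaart1998} applied to $\phi(\bbeta)=Q(\tau^*,\btheta(\bX,\bbeta))$ together with the limit law $\sqrt{n}(\widehat{\bbeta}_n-\bbeta_0)\rightarrow N(\bm{0},\Omega_1^{-1}\Omega_0\Omega_1^{-1})$ from Theorem \ref{theorem2}; your chain-rule computation of $\nabla\phi(\bbeta_0)$ is the obvious elaboration of that. Your added care about the expectation in $\bm{\delta}=E[\partial Q(\tau^*,\btheta(\bX,\bbeta_0))/\partial\bbeta]$ is also well placed: as literally displayed the left-hand side is a plug-in at a given $\bX$, whose delta-method variance would use the pointwise gradient, so reading the target as covariate-averaged (with the derivative--expectation interchange justified under Assumption \ref{assum:EE}) is the interpretation that makes the stated variance formula exact.
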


\subsection{High-dimensional regularized estimation}\label{sec:2.3}

This subsection considers the case with high-dimensional covariates, i.e., $p\gg n$, and the true parameter vector $\bbeta_{0}$ is assumed to be $s$-sparse, i.e. the number of nonzero elements in $\bbeta_{0}$ is no more than $s>0$.
A regularized CQR estimation can then be introduced,
\begin{equation}\label{program2}
	\widetilde{\bbeta}_n= \arg\min_{\bbeta\in\Theta} n^{-1}L_n(\bbeta)+\sum_{j=1}^dp_{\lambda}(\bbeta_j),
\end{equation}
where $\Theta$ is given in Theorem \ref{high-dim}, $p_{\lambda}$  is
a penalty function, and it
depends on a tuning (regularization) parameter $\lambda\in \mathbb{R}^+$ with $\mathbb{R}^+=(0,\infty)$.

Consider the loss function  $L_n(\bbeta)=\sum_{k=1}^{K}\sum_{i=1}^{n}\rho_{\tau_{k}}\{Y_{i}-Q(\tau_{k},\btheta(\bX_{i},\bbeta))\}$ defined in (\ref{eq:CQR}), and $Q(\tau_{k},\btheta(\bX_{i},\bbeta))$ usually is nonconvex with respect to $\bbeta$. As a result,
$L_n(\bbeta)$ will be nonconvex although the check loss $\rho_{\tau}(\cdot)$ is convex, and there is no more harm to use nonconvex penalty functions. Specifically, we consider the component-wise penalization,
\[
\sum_{j=1}^dp_{\lambda}(\bbeta_j)=\sum_{j=1}^d\sum_{l=1}^{p}p_{\lambda}(\beta_{jl}),
\]
where $p_{\lambda}(\cdot)$ is possibly nonconvex and satisfies the following assumption.

\begin{assumption}\label{regular}
	The univariate function $p_{\lambda}(\cdot)$ satisfies the following conditions: (i) it is symmetric around zero with $p_{\lambda}(0)=0$; (ii) it is is nondecreasing on the nonnegative real line; (iii) the function $p_{\lambda}(t)/t$ is nonincreasing with respect to $t\in \mathbb{R}^+$; (iv) it is differentiable for all $t\neq0$ and subdifferentiable at $t=0$, with $\lim_{t\rightarrow 0^{+}}p_{\lambda}^{\prime}(t)=\lambda L$ and $L$ being a constant; (v) there exists $\mu>0$ such that $p_{\lambda,\mu}=p_{\lambda}(t)+\frac{\mu^{2}}{2}t^{2}$ is convex.
\end{assumption}

The above is the $\mu$-amenable assumption given in \cite{loh2015regularized} and \cite{loh2017statistical},
and the penalty function is required not too far from the convexity.
Note that the popular penalty functions, including SCAD \citep{Fan_Li2001} and
MCP \citep{Zhang2010}, satisfy the above properties.

In the literature of nonconvex penalized quantile regression, \cite{Jiang_Jiang_Song2012} studied nonlinear quantile regressions with SCAD regularizer from the asymptotic viewpoint, while it can only handle the case with $p=o(n^{1/3})$. 
\cite{wang2012quantile} and \cite{sherwood2016partially} considered the case that $p$ grows exponentially with $n$, and their proving techniques heavily depend on the condition that the loss function should be represented as a difference of the two convex functions. However, $L_n(\bbeta)$ does not meet this requirement since quantile function $Q(\tau,\btheta)$ can be nonconvex.

On the other hand, non-asymptotic properties recently have attracted considerable attention in the theories of high-dimensional analysis; see, e.g., \cite{belloni2011,sivakumar2017high,pan2020multiplier}.
This subsection attempts to study them for our proposed quantile estimators, while it is a nontrival task since existing results only focused on linear quantile regression.
\cite{loh2015regularized} and \cite{loh2017statistical} studied the non-asymptotic properties for M-estimators with both nonconvex loss and regularizers, while they required the loss function to be twice differentiable.
The technical proofs in the Appendix follow the framework in \cite{loh2015regularized} and \cite{loh2017statistical}, and some new techniques are developed to tackle the nondifferentiability of the quantile check function.

Let $\btheta(\bgamma)=(g_1(\gamma_1),\ldots,g_d(\gamma_d))$ with $g_j^{-1}(\cdot)$s being link functions and $\bgamma=(\gamma_1,\ldots,\gamma_d)$, and we can then denote $Q(\tau,\bgamma):=Q(\tau,\btheta(\bgamma))$.
Moreover, by letting $\gamma_j(\bX,\bbeta)=\bX\bbeta_{j}$ for $1\leq j\leq d$ and $\bgamma(\bX,\bbeta)=(\gamma_1(\bX,\bbeta),\ldots,\gamma_d(\bX,\bbeta))$, we can further denote  $Q(\tau,\bgamma(\bX,\bbeta)):=Q(\tau,\btheta(\bX,\bbeta))$.

\begin{assumption}\label{Cond-qf}
	Quantile function $Q(\tau,\bgamma)$ is differentiable with respect to $\bgamma$, and there exist two positive constants $L_Q$ and $C_{X}$ such that $\max_{1\leq k\leq K}\|{\partial Q(\tau_{k},\bgamma)}/{\partial\bgamma}\|\leq L_Q$ and $\|\bX\|_{\infty}\le C_{X}$.
\end{assumption}

The differentiable assumption of quantile functions allows us to use the Lipschitz property and multivariate contraction theorem.
The boundedness of covariates is to assure that the bounded difference inequality can be used, and it can be relaxed with more complicated and lengthy technical details \citep{Wang_He_2021}.

Denote by $\mathcal{B}_{R}(\bbeta_0)=\{\bbeta\in\mathbb{R}^{dp}: \|\bbeta-\bbeta_0\|\leq R\}$ the Euclidean ball centered at $\bbeta_0$ with radius $R>0$, and let $\lambda_{\min}(\bbeta)$ be the smallest eigenvalue of matrix
\[
\Omega_{2}(\bbeta)=\sum_{k=1}^{K}E\left[\frac{\partial Q(\tau_{k},\btheta(\bX,\bbeta))}{\partial\bbeta}\frac{\partial Q(\tau_{k},\btheta(\bX,\bbeta))}{\partial\bbeta^{\prime}}\right].
\]

\begin{assumption}\label{Cbeta}
	There exists a fixed $R>0$ such that
	$\lambda_{\min}^0 = \inf_{\bbeta\in\mathcal{B}_{R}(\bbeta_{0})}\lambda_{\min}(\bbeta)>0$, and assume that $f_{\min}=\min_{1\leq k\leq K}\inf_{\bbeta\in\mathcal{B}_{R}(\bbeta_{0})}f_{Y}\left\{Q(\tau_{k},\btheta(\bX,\bbeta))|\bX\right\}>0$.
\end{assumption}

The above assumption guarantees that the population loss $\bar{L}(\bbeta)=E[n^{-1}L_n(\bbeta)]$ is strongly convex around the true parameter vector $\bbeta_{0}$. Specifically, let $\bar{\mathcal{E}}(\Delta) =\bar{L}(\bbeta_0+\Delta)-\bar{L}(\bbeta_0)-\Delta^{\prime}{\partial \bar{L}(\bbeta_0) }/{\partial \bbeta}$ be the first-order Taylor expansion. Then, by Assumption \ref{Cbeta}, we have that $\bar{\mathcal{E}}(\Delta) \geq 0.5f_{\min}\lambda_{\min}^0\|\Delta\|^2$ for all $\Delta$ such that $\|\Delta\|\leq R$; see Lemma \ref{RSC} in the Appendix
for details. We next obtain the non-asymptotic estimation bound of $\widetilde{\bbeta}_n$.

\begin{theorem}\label{high-dim}
	Suppose that Assumptions  \ref{assum:identification} and \ref{regular}-\ref{Cbeta} hold, $2f_{\min}\lambda_{\min}^{0}>\mu $, $n\gtrsim\log p$ and $\lambda \gtrsim  \sqrt{\log p/n}$.
	Then the minimizer $\widetilde{\bbeta}_n$ of \eqref{program2} with $\Theta=\mathcal{B}_{R}(\bbeta_{0})$ satisfies the error bounds of
	\begin{align*}
		\|\widetilde{\bbeta}_n-\bbeta_{0}\|\le\frac{6L\sqrt{s}\lambda}{4\alpha-\mu}~~~~~\mbox{and}~~~~~\|\widetilde{\bbeta}_n-\bbeta_{0}\|_{1}\le\frac{24Ls\lambda}{4\alpha-\mu},
	\end{align*}
	with probability at least $1-c_{1}p^{-c_{2}}-K\max\{\log p, \log n\} p^{-c^{2}}$ for any $c>1$, where $\alpha=0.5f_{\min}\lambda_{\min}^0$, $\mu$ and $L$ are defined in Assumption \ref{regular}, $s$ is the number of nonzero elements in $\bbeta_{0}$, and the constants
		$c_{1}$ and $c_{2}>0$ are given in Lemma \ref{infty} of the Appendix.
\end{theorem}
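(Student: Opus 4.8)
The plan is to follow the framework for regularized $M$-estimation with nonconvex loss and $\mu$-amenable penalty developed in \cite{loh2015regularized,loh2017statistical}, which reduces the problem to two probabilistic ingredients — a deviation bound on the score of the empirical loss at $\bbeta_0$, and a local restricted strong convexity (RSC) condition for $n^{-1}L_n$ — after which the stated bounds follow from a deterministic optimization argument. The recurring difficulty, absent in \cite{loh2015regularized}, is that $\rho_{\tau_k}$ and hence $L_n$ is not differentiable, so both ingredients must be re-derived by exploiting that (i) integrating the check function against the conditional density smooths the population loss, and (ii) $Q(\tau,\bgamma)$ is differentiable with bounded derivative and the covariates are bounded (Assumption \ref{Cond-qf}).

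First I would control the score in the $\ell_\infty$ norm. Writing $\psi_{\tau_k}(u)=\tau_k-I(u<0)$, a subgradient of $n^{-1}L_n$ at $\bbeta_0$ has coordinates that are averages of the i.i.d.\ terms $-\psi_{\tau_k}\{Y_i-Q(\tau_k,\btheta(\bX_i,\bbeta_0))\}\,\partial Q(\tau_k,\btheta(\bX_i,\bbeta_0))/\partial\bbeta$. Because $Q(\tau_k,\btheta(\bX,\bbeta_0))$ is by definition the conditional $\tau_k$-quantile, $E[\psi_{\tau_k}\{Y-Q(\tau_k,\btheta(\bX,\bbeta_0))\}\mid\bX]=0$, so each coordinate is a mean-zero average. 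Assumption \ref{Cond-qf} and $\|\bX\|_\infty\le C_X$ make every summand uniformly bounded, so a bounded-difference (or Bernstein) inequality together with a union bound over the $dp$ coordinates yields $\|\partial(n^{-1}L_n(\bbeta_0))/\partial\bbeta\|_\infty\lesssim\sqrt{\log p/n}$ with probability at least $1-c_1p^{-c_2}$; this is Lemma \ref{infty}, and it explains the choice $\lambda\gtrsim\sqrt{\log p/n}$ as well as the factor $L$ in the bound, which enters through $p_\lambda'(0^+)=\lambda L$ from Assumption \ref{regular}.

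Next I would establish the RSC condition, which is where the main work lies. For the population loss the paragraph preceding the theorem already supplies $\bar{\mathcal E}(\Delta)\ge\alpha\|\Delta\|^2$ for $\|\Delta\|\le R$, with $\alpha=0.5f_{\min}\lambda_{\min}^0$ (Lemma \ref{RSC}); the positivity of $\lambda_{\min}^0$ and $f_{\min}$ in Assumption \ref{Cbeta} is exactly what drives this. The task is to show that the empirical analogue of the first-order remainder, $D(\Delta)=n^{-1}\{L_n(\bbeta_0+\Delta)-L_n(\bbeta_0)\}-\langle\text{score},\Delta\rangle$, does not fall far below $\bar{\mathcal E}(\Delta)$ uniformly on the local ball $\mathcal B_R(\bbeta_0)$. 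I would write $D(\Delta)=\bar{\mathcal E}(\Delta)+\{D(\Delta)-E\,D(\Delta)\}$ and bound the centered empirical process by symmetrization followed by the multivariate contraction theorem, which is licensed because each $\rho_{\tau_k}$ is Lipschitz and $\bgamma\mapsto Q(\tau_k,\bgamma)$ has bounded gradient; this peels off the check function and leaves a Rademacher average of the linear maps $\bX\mapsto\bX'\bbeta_j$, whose $\ell_\infty/\ell_1$ duality produces a tolerance term of order $(\log p/n)\|\Delta\|_1^2$. The bounded-difference inequality, applicable by the boundedness in Assumption \ref{Cond-qf}, concentrates the process around its mean, and a union bound across the $K$ levels together with a peeling over a logarithmic grid of radii accounts for the $K\max\{\log p,\log n\}p^{-c^2}$ term. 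The upshot is an inequality of the form $D(\Delta)\ge\alpha\|\Delta\|^2-c_0(\log p/n)\|\Delta\|_1^2$ on $\mathcal B_R(\bbeta_0)$.

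Finally, with these two ingredients in hand I would run the standard deterministic argument. By optimality of $\widetilde{\bbeta}_n$ and the amenable decomposition $p_\lambda(t)=\lambda L|t|-q_\lambda(t)$ with $q_\lambda$ convex and $p_\lambda(t)+\tfrac{\mu^2}{2}t^2$ convex (Assumption \ref{regular}), one first shows the error $\widehat\Delta=\widetilde{\bbeta}_n-\bbeta_0$ obeys an approximate cone condition $\|\widehat\Delta_{S^c}\|_1\le 3\|\widehat\Delta_S\|_1$ on the support $S$ of $\bbeta_0$, so that $\|\widehat\Delta\|_1\le 4\sqrt{s}\|\widehat\Delta\|$. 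Feeding this into the RSC inequality and using $n\gtrsim\log p$ together with $\lambda\gtrsim\sqrt{\log p/n}$ to absorb the $\|\widehat\Delta\|_1^2$ deviation into a lower-order term, one is left with an effective curvature $4\alpha-\mu>0$ guaranteed by the hypothesis $2f_{\min}\lambda_{\min}^0>\mu$; combining with the score bound $\lambda\ge 2\|\text{score}\|_\infty$ and rearranging gives $\|\widehat\Delta\|\le 6L\sqrt{s}\lambda/(4\alpha-\mu)$ and then $\|\widehat\Delta\|_1\le 24Ls\lambda/(4\alpha-\mu)$ through the cone bound. The main obstacle throughout is the nondifferentiability of $L_n$: it forces the score bound to be phrased via subgradients and the RSC to be obtained through contraction and concentration for the nonsmooth check loss, rather than by bounding a Hessian as in the differentiable theory of \cite{loh2015regularized}.
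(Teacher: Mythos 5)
Your overall architecture matches the paper's: a score bound $\|n^{-1}\nabla L_n(\bbeta_0)\|_\infty\lesssim\sqrt{\log p/n}$ (the paper's Lemma \ref{infty}, proved via sub-Gaussianity of $\psi_{\tau_k}$ and a union bound, essentially as you describe), a local restricted strong convexity statement, and a Loh--Wainwright-style deterministic argument with the cone condition and $\mu$-amenability. However, there is a genuine gap in your RSC step, and it is exactly at the point where this paper departs from \cite{loh2015regularized}. You claim that symmetrization plus the multivariate contraction theorem ``produces a tolerance term of order $(\log p/n)\|\Delta\|_1^2$.'' It does not: contraction peels off the Lipschitz check function and leaves a Rademacher average of \emph{linear} functionals of $\Delta$, whose supremum over $\{\|\Delta\|_1\le r_1\}$ has expectation of order $r_1\sqrt{\log p/n}$ --- linear in $r_1$, not quadratic. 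The quadratic tolerance in \cite{loh2015regularized} comes from Taylor/Hessian structure of a twice-differentiable loss, which is unavailable here; for the nonsmooth check loss the empirical first-order remainder genuinely fluctuates at the scale $\sqrt{\log p/n}\,\|\Delta\|_1$. This is precisely why the paper introduces its own LRSC condition (Definition \ref{def:LRSC}) with the \emph{larger, linear} tolerance $\eta\sqrt{\log p/n}\,\|\Delta\|_1$, restricted to an annulus $r\le\|\Delta\|_2\le R$, and explicitly remarks that this differs from the $(\log p/n)\|\Delta\|_1^2$ form of \cite{loh2015regularized}.

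This error propagates into your deterministic step. With a quadratic tolerance, the absorption you describe (``feeding the cone bound $\|\widehat\Delta\|_1\le 4\sqrt{s}\|\widehat\Delta\|_2$ into RSC and absorbing the deviation'') would turn the tolerance into $16 s(\log p/n)\|\widehat\Delta\|_2^2$ and hence would require $n\gtrsim s\log p$, not the stated $n\gtrsim\log p$ --- a mismatch with the theorem's hypotheses that signals the wrong mechanism. The paper's Lemma \ref{deterministic-analysis} instead absorbs the linear tolerance into the penalty level, via the requirement $\lambda\ge\frac{4}{L}\max\{\|n^{-1}\nabla L_n(\bbeta_0)\|_\infty,\eta\sqrt{\log p/n}\}$ (this is where $\lambda\gtrsim\sqrt{\log p/n}$ is really used), and handles the lower radius $r$ of the annulus by a case split: either $\|\widetilde{\bbeta}_n-\bbeta_0\|_2\le 3\sqrt{s}\lambda/(4\alpha-\mu)$ and the bound holds trivially, or the error lies in the annulus where LRSC applies. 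The peeling over radii, which you correctly identify as the source of the $K\max\{\log p,\log n\}p^{-c^2}$ probability term, is likewise performed for the linear-tolerance event. To repair your proposal, replace the claimed quadratic tolerance by the linear one your own contraction computation actually yields, and rerun the deterministic argument in the paper's form (tolerance dominated by $\lambda$, annulus plus case split) rather than the curvature-absorption form of \cite{loh2015regularized}.
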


In practice, we can choose $\lambda\asymp \sqrt{\log p/n}$, and it then holds that $\|\widetilde{\bbeta}_n-\bbeta_{0}\|\lesssim \sqrt{s\log p/n}$, which has the standard rate of error bounds; see, e.g., \citep{loh2017statistical}.
Moreover, for the predicted conditional quantile of $Q(\tau^{*},\btheta(\bX,\widetilde{\bbeta}_{n}))$ at any level $\tau^{*}\in\mathcal{I}$, it can be readily verified that $|Q(\tau^{*},\btheta(\bX,\widetilde{\bbeta}_{n}))-Q(\tau^{*},\btheta(\bX,{\bbeta}_0))|$ has the same convergence rate as $\|\widetilde{\bbeta}_n-\bbeta_{0}\|$.
Finally, the above theorem requires the minimization \eqref{program2} to be conducted in $\Theta=\mathcal{B}_{R}(\bbeta_{0})$, which is unknown but fixed.
This enables us to solve the problem by conducting a random initialization in optimizing algorithms.

\section{Implementation issues}

\subsection{Optimizing algorithms}
This subsection provides algorithms to search for the CQR estimator at \eqref{eq:CQR} and regularized estimator at \eqref{program2}. 

For the CQR estimation without penalty at \eqref{eq:CQR}, we employ the commonly used gradient descent algorithm to search for estimators, and the $(r+1)$th update is given by
\[
{\bbeta}^{(r+1)} = {\bbeta}^{(r)} - \eta^{(r)} \, \nabla L_n({\bbeta}^{(r)}),
\]
where $\widehat{\bbeta}_n^{(r)}$ is from the $r$th iteration, and $ \eta ^{(r)}$ is the step size. 
Note that the quantile check loss is nondifferentiable at zero, and $\nabla L_n({\bbeta}^{(r)})$ in the above refers to the subgradient \citep{moon2021learning} instead.
In practice, too small step size will cause the algorithm to converge slowly, while too large step size may cause the algorithm to diverge.
We choose the step size by the backtracking line search (BLS) method, which is shown to be simple and effective; see \cite{nonlinearprog2016}. 
Specifically, the algorithm starts with a large step size and, at $(r+1)$th update, it is reduced by keeping multiplying a fraction of $b$ until $L_{n}(\bbeta^{(r+1)})-L_{n}(\bbeta^{(r)})<-a\eta^{(r)}\|\nabla L_{n}(\bbeta^{(r)})\|_{2}^{2}$, where $a$ is another hyper-parameter. The simulation experiments in Section 4 work well with the setting of $(a,b)=(0.3,0.5)$.

For the regularized estimation at \eqref{program2}, we adopt the composite gradient descent algorithm \citep{loh2015regularized}, which is designed for a nonconvex problem and fits our objective functions well.
Consider the SCAD penalty, which satisfies Assumption \ref{regular} with $ L=1 $ and $ \mu = 1/(\alpha -1)$.
We then can rewrite the optimization problem at \eqref{program2} into
\begin{equation*}
	\widetilde{\bbeta}_n= \arg\min_{\bbeta\in\Theta} \underbrace{\{n^{-1}L_n(\bbeta)-\mu\|\bbeta\|_{2}^{2}/2\}}_{\widetilde{L}_{n}(\bbeta)}+\lambda g(\bbeta),
\end{equation*}
where, from Assumption \ref{regular}, $g(\bbeta)=\{\sum_{j=1}^{d}p_{\lambda}(\bbeta_{j})+\mu\|\bbeta\|_{2}^{2}/2\}/\lambda$ is convex. 
As a result, similar to the composite gradient descent algorithm in \cite{loh2015regularized}, the $(r+1)$th update can be calculated by
\begin{equation*}
	{\bbeta}^{(r+1)}=\arg\min\left\{\|\bbeta-(\bbeta^{(r)}-\eta\nabla\widetilde{L}_{n}(\bbeta))\|_{2}^{2}/2+\lambda\eta g(\bbeta)\right\},
\end{equation*}
which has a closed-form solution of
\begin{align*}
	{\bbeta}^{(r+1)}=\left\{\begin{aligned}
		&0,&\quad 0 \leq |z| \leq \nu \lambda\\
		&z - \textrm{sign}(z) \cdot \nu\lambda,&\quad \nu \lambda \leq |z| \leq (\nu+1) \lambda\\
		&\{z - \textrm{sign}(z) \cdot \frac{\alpha \nu\lambda}{\alpha - 1}\}/\{1 - \frac{\nu}{\alpha -1}\},&\quad (\nu+1) \lambda \leq |z| \leq \alpha \lambda\\
		&z, &\quad |z| \geq \alpha\lambda
	\end{aligned}\right.
\end{align*}
with $z=(\bbeta^{(r)}-\eta\nabla\widetilde{L}_{n}(\bbeta^{(r)}))/(1+\mu\eta)$ and $\nu = \eta / (1 + \mu \eta)$, 
where the step size $\eta$ is chosen by the BLS method. 

\subsection{Hyper-parameter selection}

There are two types of hyper-parameters in the penalized estimation at \eqref{program2}: the tuning parameter $\lambda$ and quantile levels of $\tau_k$ with $1\leq k\leq K$.
We can employ validation methods to select the tuning parameter $\lambda$ such that the composite quantile check loss is minimized.

The selection of $\tau_k$'s is another important task since it will affect the efficiency of resulting estimators.
Suppose that we are interested in some high quantiles of $\tau_m^*$ with $1\leq m\leq M$, and then the QIR model can be assumed to the interval of $\mathcal{I}=[\tau_0,1]$, which contains all $\tau_m^*$'s.
We may further choose a suitable interval of $[\tau_{L},\tau_{U}]\subset \mathcal{I}$ such that $\tau_k$'s can be equally spaced on it, i.e. $\tau_k=\tau_L+k(\tau_U-\tau_L)/(K+1)$ for $1\leq k\leq K$, where it can be set to $\tau_0=\tau_L$.
As a result, the selection of $\tau_k$'s is equivalent to that of $[\tau_L,\tau_U]$.

We may choose $\tau_U$ such that it is close to $\tau_m^*$'s, while a reliable estimation can be afforded at this level. The selection of $\tau_L$ is a trade-off between estimation efficiency and model misspecification; see \cite{wang2012estimation,wang2009tail}.
On one hand, to improve estimation efficiency, we may choose $\tau_L$ close to 0.5 since the richest observations will appear at the middle for most real data.
On the other hand, we have to assume the parametric structure over the whole interval of $[\tau_L,1]$, i.e. more limitations will be added to the real example.
The criterion of prediction errors (PEs) is hence introduced, 
\[
PE = \frac{1}{M}\sum_{m=1}^{M}\frac{1}{\sqrt{\tau_{m}^{*}(1-\tau_{m}^{*})}} \cdot  \sqrt{n}\left|\frac{1}{n}\sum_{i=1}^n I\{y_i < \widehat{Q}_{Y}(\tau_{m}^{*} | \bX_i)\} - \tau_{m}^* \right|,
\]
where we will choose $\tau_{L}$ with the minimum value of PEs; see also \cite{wang2012estimation}.

In practice, the cross validation method can be used to select $\lambda$ and $\tau_L$ simultaneously.
Specifically, the composite quantile check loss and PEs are both evaluated at validation sets. 
For each candidate interval of $[\tau_{L},\tau_{U}]$, the tuning parameter $\lambda$ is selected according to the composite quantile check loss, and the corresponding value of PE is also recorded.
We then will choose the value of $\tau_{L}$, which corresponds to the minimum value of PEs among all candidate intervals.


\section{Simulation Studies}
\subsection{Composite quantile regression estimation} \label{sec:sim-cqr}

This subsection conducts simulation experiments to evaluate the finite-sample performance of the low-dimensional composite quantile regression (CQR) estimation at \eqref{eq:CQR}.

The Tukey Lambda distribution in Example \ref{TukeyLambda} is used to generate the $i.i.d.$ sample,
\begin{align}
	\begin{split}\label{eq:dgp}
	Y_i & = Q_Y(U_i, \btheta(\bX_i,\bbeta)) = \theta_{1}(\bX_i,\bbeta)+\theta_{2}(\bX_i,\bbeta)\frac{U_{i}^{\theta_{3}(\bX_i,\bbeta)}-(1-U_{i})^{\theta_{3}(\bX_i,\bbeta)}}{\theta_{3}(\bX_i,\bbeta)}\\
	&\theta_{1}(\bX_i,\bbeta) = g_1(\bX_i'\bbeta_1),~ \theta_{2}(\bX_i,\bbeta) = g_2(\bX_i'\bbeta_2),~ \theta_{3}(\bX_i,\bbeta) = g_3(\bX_i'\bbeta_3),
	\end{split}
\end{align}
where $\{U_i\}$ are independent and follow $\text{Uniform}(0, 1)$, $\bX_i=(1,X_{i1},X_{i2})^{\prime}$, $\{(X_{i1},X_{i2})^{\prime}\}$ is an $i.i.d.$ sequence with 2-dimensional standard normality.
The true parameter vector is $ \bbeta_0 = (\bbeta_{01}',\bbeta_{02}',\bbeta_{03}')'$, and we set the location parameters $\bbeta_{01}=(1, 0.5, -1)'$, the scale parameters $\bbeta_{02}=(1, 0.5, -1)'$ and the tail parameters $\bbeta_{03}=(1, -1, 1)' $. 
For the tail index $\theta_{3}(\bX_i,\bbeta)$, before generating the data, we first scale each covariate into the range of $ [-0.5, 0.5] $ such that a relatively stable sample can be generated.
In addition, $ g_1$, $g_2$ and $g_3 $ are the inverse of link functions.
We choose identity link for the location index and softplus-related link for the scale and tail indices, i.e., $ g_1(x) = x, g_2(x) = \textrm{softplus}(x)$ and $ g_3(x) = 1-\textrm{softplus}(x) $, where $ \textrm{softplus}(x) = \log(1+\exp(x)) $ is a smoothed version of $ x_+ = \max\{0, x\} $ and hence the name. Note that $g_{2}(x)>0$ and $g_{3}(x)<1$.
We consider three sample sizes of $n=500$, 1000 and 2000, and there are 500 replications for each sample size.

The algorithm for CQR estimation in Section 3 is applied with $K=10$ and $\tau_k$'s being equally spaced over $[\tau_L,\tau_U]$.
We consider three quantile ranges of $(\tau_L,\tau_U)=(0.5, 0.99)$, $(0.7, 0.99)$ and $(0.9, 0.99)$, and the estimation efficiency is first evaluated. 
Figure \ref{box:low} gives the boxplots of three fitted location parameters $\widehat{\bbeta}_{1n}=(\widehat{\beta}_{1,1},\widehat{\beta}_{1,2},\widehat{\beta}_{1,3})^{\prime}$.
It can be seen that both bias and standard deviation decrease as the sample size increase.
Moreover, when $\tau_L$ decreases, the quantile levels with richer observations will be used for the estimation and, as expected, both bias and standard deviation will decrease.
Boxplots for fitted scale and tail parameters show a similar pattern and hence are omitted to save the space.

We next evaluate the prediction performance of $Q(\tau^{*},\btheta(\bX,\widehat{\bbeta}_{n}))$ at two interesting quantile levels of $\tau^*=0.991$ and 0.995. Consider two values of covariates, $\bm{X}=(1,0.1,-0.2)^{\prime}$ and $(1 ,0, 0)^{\prime} $, and the corresponding tail indices are $\theta_{3}(\bX,\bbeta_0)=-0.1139$ and $-0.3132$, respectively.
Note that the Tukey lambda distribution can provide a good approximation to Cauchy and normal distributions when the tail indices are $-1$ and $0.14$, respectively, and it becomes more heavy-tailed when the tail index decreases \citep{freimer1988study}.  
The prediction error in terms of squared loss (PES), $[Q(\tau^{*},\btheta(\bX,\widehat{\bbeta}_{n}))-Q(\tau^{*},\btheta(\bX,{\bbeta}_0))]^2$, is calculated for each replication, and the corresponding sample mean refers to the commonly used mean square error.
Table \ref{tab:low} presents both the sample mean and standard deviation of PESs across 500 replications as in \cite{wang2012estimation}.
A clear trend of improvement can be observed as the sample size becomes larger, and the prediction is more accurate at the 99.1-th quantile level for almost all cases.

\subsection{High-dimensional regularized estimation} \label{sec:sim-hd}
This subsection conducts simulation experiments to evaluate the finite-sample performance of the high-dimensional regularized estimation at \eqref{program2}.

For the data generating process at \eqref{eq:dgp}, we consider three dimensions of $p=50$, $100$ and $150$, and the true parameter vectors are extended from those in Section 4.1 by adding zeros, i.e.
$ \bbeta_{01}=(1, 0.5, -1, 0, ..., 0)'$, $ \bbeta_{02}=(1, 0.5, -1, 0, ..., 0)'$ and $\bbeta_{03}=(1, -1, 1, 0, ..., 0)'$, which are vectors of length $p$ with $3$ non-zero entries. 
As a result, all true parameters $\bbeta_{0}=(\bbeta_{01}',\bbeta_{02}',\bbeta_{03}')'$ make a vector of length $3p$ with $s=9$ non-zero entries.
The sample size is chosen such that $n=\left\lfloor cs\log p\right\rfloor$ with $c=5$, 10, 20, 30, 40 and 50, where $\lfloor x\rfloor$ refers to the largest integer smaller than or equal to $x$. All other settings are the same as in the previous subsection.

The algorithm for regularized estimation in Section 3 is used to search for the estimators, and we generate an independent validation set of size $5n$ to select tuning parameter $\lambda$ by minimizing the composite quantile check loss; see also \cite{wang2012quantile}. 
Figure \ref{fig:high} gives the estimation errors of $\|\widetilde{\bbeta}_n-\bbeta_{0}\|$. It can be seen that $\|\widetilde{\bbeta}_n-\bbeta_{0}\|$ is roughly proportional to the quantity of $\sqrt{s\log p/n}$, and this confirms the convergence rate in Theorem \ref{high-dim}. Moreover, the estimation errors approach zero as the sample size $n$ increase, and we can then conclude the consistency of $\widetilde{\bbeta}_n$.
Finally, when $\tau_L$ increases, the quantile levels with less observations will be used in the estimating procedure, and hence larger estimation errors can be observed.

We next evaluate the prediction performance at quantile levels $\tau^*= 0.991 $ and $ 0.995 $, and covariates $\bX$ take values of $(1,0.1,-0.2,0,\cdots,0)'$ and $(1,0,0,0,\cdots,0)'$, similar to those in the previous subsection.
Table \ref{tab:high1} gives mean square errors of the predicted conditional quantiles $Q(\tau^{*},\btheta(\bX,\widetilde{\bbeta}_{n}))$, as well as the sample standard deviations of prediction errors in squared loss, with $p=50$.
It can be seen that larger sample size leads to much smaller mean square errors.
Moreover, when $\tau_L$ is larger, the prediction also becomes worse, and it may be due to the lower estimation efficiency.
Finally, similar to the experiments in the previous subsection, the prediction at $\tau^*= 0.991 $ is more accurate for almost all cases.
The results for the cases with $p=100$ and 150 are similar and hence omitted.

Finally, we consider the following criteria to evaluate the performance of variable selection: average number of selected active coefficients (size), percentage of active and inactive coefficients both correctly selected simultaneously ($\textrm{P}_{\textrm{AI}}$), percentage of active coefficients correctly selected ($\textrm{P}_{\textrm{A}}$), percentage of inactive coefficients correctly selected ($\textrm{P}_{\textrm{I}}$), false positive rate (FP), and false negative rate (FN).
Table \ref{tab:high2} reports the selecting results with $p=50$ and $c=10$, 30 and 50.
When $\tau_L$ is larger, both $\textrm{P}_{\textrm{AI}}$ and $\textrm{P}_{\textrm{A}}$ decrease, and it indicates the increasing of selection accuracy. In addition, performance improves when sample size gets larger. The results for $p=100$ and 150 are similar and hence omitted.

\section{Application to Childhood Malnutrition}

Childhood malnutrition is well known to be one of the most urgent problems in developing countries. The Demographic and Health Surveys (DHS) has conducted nationally representative surveys on child and maternal health, family planning and child survival, etc., and this results in many datasets for research purposes. 
The dataset for India was first analyzed by \cite{koenker2011additive}, and can be downloaded from \url{http://www.econ.uiuc.edu/~roger/research/bandaids/bandaids.html}.
It has also been studied by many researchers \citep{fenske2011identifying,belloni2019valid} for childhood malnutrition problem in India, and quantile regression with low- or high-dimensional covariates was conducted at the levels of $\tau=0.1$ and $0.05$.
The proposed model enables us to consider much lower quntiles, corresponding to more severe childhood malnutrition problem.

The child's height is chosen as the indicator for malnutrition as in \cite{belloni2019valid}. Specifically, the response  is set to $Y=-100\log(\text{child's height in centimeters})$, and we then consider high quantiles to study the childhood malnutrition problem such that it is consistent with previous sections.
Other variables include seven continuous and 13 categorical ones, and they contain both biological factors and socioeconomic factors that are possibly related to childhood malnutrition.
Examples of biological factors include the child's age, gender, duration of breastfeeding in months, the mother's age and body-mass index (BMI),
and socioeconomic factors contain the mother's employment status, religion, residence, and the availability of electricity.
All seven continuous variables are standardized to have mean zero and variance one, and two-way interactions between all variables are also included. Moreover, we concentrate on the samples from pool families.
As a result, there are $p=328$ covariates in total after removing variables with all elements being zero, and the sample size is $n=6858$.
Denote the full model size by $(328,328,328)$, which correspond to the sizes of location, scale and tail, respectively.  Furthermore, as in the simulation experiments, covariates are further rescaled to the range $[-0.5,0.5]$ for the tail index.

We aim at two high quanitles of $\tau^*=0.991$ and $0.995$, and the algorithm for high-dimensional regularized estimation in Section 3 is first applied to select the interval of $[\tau_L,\tau_U]$.
Specifically, the value of $\tau_U$ is fixed to $0.99$, and that of $\tau_L$ is selected among $\tau_L=0.9+0.01j$ with $1\leq j\leq 8$.
The value of $K$ is set to 10, and the $\tau_k$'s with $1\leq k\leq K$ are equally spaced over $[\tau_L,\tau_U]$.
For each $\tau_L$, the whole samples are randomly split into five parts with equal size, except that one part is short of two observations, and the 5-fold cross validation is used to select the tuning parameter $\lambda$.
To stabilize the process, we conduct the random splitting five times and choose the value of $\lambda$ minimizing the composite check loss over all five splittings.
The averaged value of PEs is also calculated over all five splittings, and the corresponding plot is presented in Figure \ref{fig:compare}. As a result, we choose $\tau_L=0.96$ since it corresponds to the minimum value of PEs. 

We next apply the QIR model to the whole dataset with $[\tau_L,\tau_U]=[0.96,0.99]$, and the tuning parameter $\lambda$ is scaled by$\sqrt{4/5}$ since the sample size changes from $4n/5$ to $n$. The fitted model is of size $(14,16,19)$, and we can predict the conditional quantile structure at any level $\tau^*\in(0.96,1)$.
For example, consider the variable of child's age, and we are interested in children with ages of 20, 30 and 40 months.
The duration of breastfeeding is set to be the same as child's age, since the age is always larger than the duration of breastfeeding, and we set the values of all other variables in $\bm{X}$ to be the same as the $460$th observation, which has the response value being the sample median.
Figure \ref{fig:interpretation} plots the predicted quantile curves for three different ages. It can be seen that younger children may have extremely lower heights, and we may conclude that it may be easier for younger children to be affected by malnutrition. 

Figure \ref{fig:interpretation} also draws quantile curves for mother's education, child's gender and mother's unemployment condition, and the values of variables at the $460$th observation are also used for non-focal covariates in the prediction.
For child's gender, the baby boy is usually higher than baby girls as observed in \cite{koenker2011additive}, while the difference varnishes for much larger quantiles.
In addition, the quanitle curves for mother's education are almost parallel, while those for mother's unemployment condition are crossed.
More importantly, all these new insights are at very high quantiles, and this confirms the necessity of the proposed model.

Finally, we compare the proposed QIR model with two commonly used ones in the literature: (i.) linear quantile regression at the level of $\tau^*$ with $\ell_1$ penalty in \cite{belloni2019valid}, and (ii.) extremal quantile regression in \citep{wang2012estimation} adapted to high-dimensional data.
The prediction performance at $\tau^*=0.991$ and $0.995$ is considered for the comparison, and we fix $[\tau_L,\tau_U]=[0.96,0.99]$.
For Method (ii.), we consider $K=4.5n^{1/3}$ quantile levels, equally spaced over $[0.96,0.99]$, and the linear quantile regression with $\ell_1$ penalty is conducted at each level. As in \cite{wang2012estimation}, we can estimate the extreme value index, and hence the fitted structures can be extrapolated to the level of $\tau^*$.
Note that there is no theoretical justification for Method (ii.) in the literature. 
As in simulation experiments, the tuning parameter $\lambda$ in the above three methods is selected by minimizing the composite check loss in the testing set.
We randomly split the data 100 times, and one value of PE can be obtained from each splitting.
Figure \ref{fig:compare} gives the boxplots of PEs from our model and two competing methods, and the advantages of our model can be observed at both target levels of $\tau^*=0.991$ and 0.995.

\section{Conclusions and discussions}

This paper proposes a reliable method for the inference at extreme quantiles with both low- and high-dimensional covariates.
The main idea is first to conduct a composite quantile regression at fixed quantile levels, and we then can extrapolate the estimated results to extreme quantiles by assuming a parametric structure at tails. 
The Tukey lambda structure can be used due to its flexibility and the explicit form of its quantile functions, and the success of the proposed methodology has been demonstrated by extensive numeral studies. 

This paper can be extended in the following two directions.
On one hand, in the proposed model, a parametric structure is assumed over the interval of $[\tau_0,1]$. Although the criterion of PE is suggested in Section 3 to balance the estimation efficiency and model misspecification, it should be interesting to provide a statistical tool for the goodness-of-fit. \cite{Dong_Feng_Li2019} introduced a goodness-of-fit test for parametric quantile regression at a fixed quantile level, and it can be used for our problem by extending the test statistic from a fixed level to the interval of $[\tau_0,1]$. We leave it for the future research.
On the other hand, the idea in this paper is general and can be applied to many other scenarios. 
For example, for conditional heteroscedastic time series models, it is usually difficult to conduct the quantile estimation at both median and extreme quantiles. The difficulty at extreme quantiles is due to the sparse data at tails, while that at median is due to the tiny values of fitted parameters \citep{Zhu_Zheng_Li2018,Zhu_Li2021}.
Our idea certainly can be used to solve this problem to some extent.


\renewcommand{\thesection}{A}
\renewcommand{\theequation}{\thesection.\arabic{equation}}
\renewcommand{\thelemma}{\thesection.\arabic{lemma}}
\setcounter{equation}{0}
\setcounter{lemma}{0}

\section*{Appendix: Technical details}

\subsection{Proofs for low-dimensional CQR Estimation}

This subsection gives technical proofs of Lemma \ref{lemma:tukey} and Theorems \ref{thm-identification}-\ref{theorem2} in section 2.2.
Two auxiliary lemmas are also presented at the end of this subsection: Lemma \ref{lemma:A1} is used for the proof of Lemma \ref{lemma:tukey}, and Lemma \ref{lem-thm3} is for that of Theorem \ref{theorem2}.
The proofs of these two auxiliary lemmas are given in a supplementary file.

\begin{proof}[{Proof of Lemma \ref{lemma:tukey}}]
	The Tukey lambda distribution has the form of
	\[
	Q(\tau,\btheta)=\theta_{1}+{\theta_{2}}\left( \frac{\tau^{\theta_{3}}-(1-\tau)^{\theta_{3}}}{\theta_{3}}\right),
	\]
	where $\theta_1\in\mathbb{R}$, $\theta_2>0$, $\theta_3\ne0$. We prove Lemma \ref{lemma:tukey} for $\theta_{3}<1$.
	Consider four arbitrary quantile levels $0<\tau_1<\tau_2<\tau_3<\tau_4<1$, and two arbitrary index vectors $\widetilde{\bm\theta}=(\widetilde{\theta}_1,...,\widetilde{\theta}_4)^{\prime}$ and $\bm\theta=(\theta_1,..,\theta_4)^{\prime}$ such that
	\begin{equation}\label{thm1-proof-eq1}
	Q(\tau_j,\widetilde{\bm\theta})=Q(\tau_j,\bm\theta) \text{ for all }1\leq j\leq 4.
	\end{equation}
	We show that $\widetilde{\bm\theta}=\bm\theta$ in the following.
	
	The first step is to prove $\widetilde{\theta}_3=\theta_{3}$ using the proof by contradiction.
	Suppose that $\widetilde{\theta}_3\ne\theta_{3}$ and, without loss of generality, we assume that $\theta_{3}<\widetilde{\theta}_3<0$.
	Denote
	$f_j(\theta_3)=\tau_{j}^{\theta_3}-(1-\tau_j)^{\theta_3}$ for $1\leq j\leq 4$.
	From \eqref{thm1-proof-eq1}, we have
	\begin{equation}\label{Eq:2q}
	\frac{f_{1}(\widetilde{\theta}_3)-f_{2}(\widetilde{\theta}_3)}{f_{3}(\widetilde{\theta}_3)-f_{2}(\widetilde{\theta}_3)} -\frac{f_{1}(\theta_3)-f_{2}(\theta_3)}{f_{3}(\theta_3)-f_{2}(\theta_3)}=0,
	\end{equation}
	and
	\begin{equation}\label{Eq:3q}
	\frac{f_{4}(\widetilde{\theta}_3)-f_{2}(\widetilde{\theta}_3)}{f_{3}(\widetilde{\theta}_3)-f_{2}(\widetilde{\theta}_3)} -\frac{f_{4}(\theta_3)-f_{2}(\theta_3)}{f_{3}(\theta_3)-f_{2}(\theta_3)}=0.
	\end{equation}
	Let us fix $\theta_3$, $\widetilde{\theta}_3$, $\tau_2$ and $\tau_3$. As a result, $\kappa_1=f_{2}(\widetilde{\theta}_3)$, $\kappa_2=f_{2}(\theta_3)$, $\kappa_3=f_{3}(\theta_3)-f_{2}(\theta_3)$ and $\kappa_4=f_{3}(\widetilde{\theta}_3)-f_{2}(\widetilde{\theta}_3)$ are all fixed values.
	Denote
	\begin{align}
	F(\tau)=\kappa_3\left\{\tau^{\widetilde{\theta}_3}-(1-\tau)^{\widetilde{\theta}_3} -\kappa_1\right\} -\kappa_4\left\{\tau^{\theta_3}-(1-\tau)^{\theta_3} -\kappa_2\right\},
	\end{align}
	\[
	\dot{F}(\tau)=\kappa_3\widetilde{\theta}_3\{\tau^{\widetilde{\theta}_3-1}+(1-\tau)^{\widetilde{\theta}_3-1}\}-\kappa_4\theta_{3}\{\tau^{{\theta}_3-1}+(1-\tau)^{{\theta}_3-1}\},
	\]
	and
	\[
	G(\tau)=\frac{\tau^{\widetilde{\theta}_3-1}+(1-\tau)^{\widetilde{\theta}_3-1}}{\tau^{{\theta}_3-1}+(1-\tau)^{{\theta}_3-1}},
	\]
	where $\dot{F}(\cdot)$ is the derivative function of $F(\cdot)$, and $\dot{F}(\tau)=0$ if and only if $G(\tau)=\kappa_4\theta_{3}/(\kappa_3\widetilde{\theta}_3)$.
	Note that equations \eqref{Eq:2q} and \eqref{Eq:3q} correspond to $F(\tau_1)=0$ and $F(\tau_4)=0$, respectively. Moreover, it can be verified that $F(\tau_2)=0$ and $F(\tau_3)=0$, i.e. the equation $F(\tau)=0$ has at least four different solutions.
	As a result, the equation $\dot{F}(\tau)=0$ or $G(\tau)=\kappa_4\theta_{3}/(\kappa_3\widetilde{\theta}_3)$ has at least three different solutions.
	While it is implied by Lemma \ref{lemma:A1} that the equation $G(\tau)=\kappa_4\theta_{3}/(\kappa_3\widetilde{\theta}_3)$ has at most two different solutions. Due to the contradiction, we prove that $\widetilde{\theta}_3=\theta_{3}$, and it is readily to further verify that $(\widetilde{\theta}_1,\widetilde{\theta}_2)=(\theta_{1},\theta_2)$. We hence accomplish the proof of Lemma \ref{lemma:tukey}(i). The result of Lemma \ref{lemma:tukey}(ii) can be proved similarly.
\end{proof}

\begin{proof}[{Proof of Theorem \ref{thm-identification}}]
	We first prove the uniqueness of $\beta_0$.
	Denote $\bar{L}_{k}(\bbeta)=E[\rho_{\tau_{k}}\{Y-Q(\tau_{k},\btheta(\bX,\bbeta))\}]$. From model (\ref{Model2}), $\bbeta_{0}$ is the minimizer not only of $\bar{L}(\bbeta)=\sum_{k=1}^{K}\bar{L}_k(\bbeta)$, but also of $\bar{L}_k(\bbeta)$ for all $1\le k\le K$. Suppose that $\bbeta_{0}^*$ is another minimizer of $\bar{L}(\bbeta)$, and then it is also the minimizer of $\bar{L}_k(\bbeta)$ for all $1\le k\le K$.
	
	Note that, for $u\neq 0$,
	\begin{align}
		\begin{split}\label{knight}
			\rho_{\tau}(u-v)-\rho_{\tau}(u)&=-v\psi_{\tau}(u)+(u-v)[I(0>u>v)-I(0<u<v)]\\
			&=-v\psi_{\tau}(u)+\int_{0}^{v}[I(u\leq s)-I(u\leq 0)]ds,
		\end{split}
	\end{align}
	where $\psi_{\tau}(u)=\tau-I(u<0)$; see \cite{Knight1998}.
	Let $U^{(k)}=Y-Q(\tau_{k},\btheta(\bX,\bbeta_0))$ and $V^{(k)}=Q(\tau_{k},\btheta(\bX,\bbeta_0^*))-Q(\tau_{k},\btheta(\bX,\bbeta_0))$. It holds that, for each $1\leq k\leq K$,
	\[
	0=\bar{L}_k(\bbeta_0^*)-\bar{L}_k(\bbeta_{0})=E\{(U^{(k)}-V^{(k)})[I(0>U^{(k)}>V^{(k)})-I(0<U^{(k)}<V^{(k)})]\},
	\]
	which implies that $V^{(k)}=Q(\tau_{k},\btheta(\bX,\bbeta_0^*))-Q(\tau_{k},\btheta(\bX,\bbeta_0))=0$ with probability one. This, together with the conditions in Lemma \ref{lemma:tukey} and the monotonic link functions, leads to the fact that $\bX'{\bbeta}_{j0}^*=\bX'{\bbeta}_{j0}$ for all $1\leq j\leq d$. We then have $\bm\beta_0^*=\bm\beta_0$ since $E(\bm X\bm X^{\prime})$ is positive definite. This accomplishes the proof of the uniqueness of $\bm\beta_0$.
\end{proof}

\begin{proof}[{Proof of Theroem \ref{theorem1}}]
	
Note that $L_n(\bbeta)=n^{-1}\sum_{k=1}^{K}\sum_{i=1}^{n}\rho_{\tau_{k}}\left\{Y_{i}-Q(\tau_{k},\btheta(\bX_{i},\bbeta))\right\}$ and $ \bar{L}(\bbeta)=E[\sum_{k=1}^{K}\rho_{\tau_{k}}\{Y-Q(\tau_{k},\btheta(\bX,\bbeta))\}]$. From Knight's identity at \eqref{knight},
\begin{align*}
\bar{L}(\bbeta)-\bar{L}(\bbeta_{0})&\le\sum_{k=1}^{K}E|Q(\tau_{k},\btheta(\bX,\bbeta))-Q(\tau_{k},\btheta(\bX,\bbeta_{0}))|\notag\\
&\le\sum_{k=1}^{K} E\sup_{\bbeta\in\Theta}\|\frac{\partial Q(\tau_{k},\btheta(\bX,\bbeta))}{\partial \bbeta}\|\|\bbeta-\bbeta_{0}\|,
\end{align*}	
which, together with the condition of $E\max_{1\leq k\leq K}\sup_{\bbeta\in\Theta}\|{\partial Q(\tau_{k},\btheta(\bX,\bbeta))}/{\partial \bbeta}\|<\infty$ and Corollary 2 of \cite{andrews1987consistency}, implies that
\begin{equation}\label{eq:thm2-1}
\sup_{\bbeta\in\Theta}|L_n(\bbeta)-L_n(\bbeta_{0})-[\bar{L}(\bbeta)-\bar{L}(\bbeta_{0})]|=o_p(1).
\end{equation}

Note that $\bar{L}(\bbeta)$ is a continuous function with respect to $\bbeta$ and, from Theorem \ref{thm-identification}, $\bbeta_0$ is the unique minimizer of $\bar{L}(\bbeta)$. As a result, for each $\delta>0$,
\begin{equation}
\epsilon=\inf_{\bbeta\in B_{\delta}^c(\bbeta_{0})}\bar{L}(\bbeta)-\bar{L}(\bbeta_0)>0,
\end{equation}
where $B_{\delta}(\bbeta_{0})=\{\bbeta: \|\bbeta-\bbeta_{0}\|\leq \delta\}$ and $B_{\delta}^c(\bbeta_{0})$ is its complement set, and hence
\begin{equation}\label{eq:thm2-2}
\left\{\inf_{\bbeta\in B_{\delta}^c(\bbeta_{0})}L_n(\bbeta)\le L_n(\bbeta_{0})\right\} \subseteq\left\{\sup_{\bbeta\in B_{\delta}^c(\bbeta_{0})}|L_n(\bbeta)-L_n(\bbeta_{0})-[\bar{L}(\bbeta)-\bar{L}(\bbeta_{0})]\}|\geq \epsilon \right\}.
\end{equation}

Note that
\[
1=P\left\{L_n(\widehat{\bbeta}_n)\le L_n(\bbeta_{0})\right\}\leq P\left\{\widehat{\bbeta}_n\in B_{\delta}(\bbeta_{0})\right\}+P\left\{\inf_{\bbeta\in B_{\delta}^c(\bbeta_{0})}L_n(\bbeta)\le L_n(\bbeta_{0})\right\}
\]
which together with \eqref{eq:thm2-1} and \eqref{eq:thm2-2}, implies that
\[
P\left\{\|\bbeta-\bbeta_{0}\|\leq \delta \right\}\geq 1 -P\left\{\inf_{\bbeta\in B_{\delta}^c(\bbeta_{0})}L_n(\bbeta)\le L_n(\bbeta_{0})\right\} \rightarrow 1,
\]
as $n\rightarrow \infty$. This accomplishes the proof of consistency.

\end{proof}

\begin{proof}[{Proof of Theorem \ref{theorem2}}]
For simplicity, we denote $Q(\tau_{k},\btheta(\bX_{i},\bbeta))$ by $Q_{ik}(\bbeta)$ in the whole proof of this theorem. Let $S_n(\bbeta)=L_n(\bbeta)-L_n(\bbeta_0)$ and, from Knight's identity at \eqref{knight},
\begin{equation}
\begin{split}
S_n(\bbeta)=&\sum_{k=1}^{K}\sum_{i=1}^{n}\left[\rho_{\tau_{k}}\left\{Y_{i}-Q_{ik}(\bbeta)\right\}-\rho_{\tau_{k}}\left\{Y_{i}-Q_{ik}(\bbeta_{0})\right\}\right]\\
=&\sum_{k=1}^{K}\sum_{i=1}^{n}\Big[\left\{Q_{ik}(\bbeta)-Q_{ik}(\bbeta_{0})\right\}\left\{I(e_{ik}<0)-\tau_{k}\right\} \\
&\hspace{17mm}+\int_{0}^{Q_{ik}(\bbeta)-Q_{ik}(\bbeta_{0})}\{I(e_{ik}\le s)-I(e_{ik}\le 0)\}ds\Big].
\end{split}
\end{equation}
Note that, by Taylor expansion, $Q_{ik}(\bbeta)-Q_{ik}(\bbeta_{0})= (\bbeta-\bbeta_0)^{\prime}{\partial Q_{ik}(\bbeta_{0})}/{\partial\bbeta} +(\bbeta-\bbeta_0)^{\prime}({\partial^2 Q_{ik}(\bbeta^{*})}/{\partial\bbeta\partial\bbeta^{\prime}} )(\bbeta-\bbeta_0)$, where $\bbeta^{*}$ is a vector between $\bbeta_{0}$ and $\bbeta$.
Let $u=\bbeta-\bbeta_{0}$,
\begin{equation*}
q_{1ik}(u)=u'\frac{\partial Q_{ik}(\bbeta_{0})}{\partial\bbeta}\hspace{3mm}\text{and}\hspace{3mm} q_{2ik}(u)=u'\frac{\partial^{2}Q_{ik}(\bbeta^{*})}{\partial\bbeta\partial\bbeta'}u.
\end{equation*}
We then decompose $S_n(\bbeta)$ into
\begin{align}
S_n(\bbeta)&=\sum_{k=1}^{K}\sum_{i=1}^{n}\Big[\left\{q_{1ik}(u)+q_{2ik}(u)\right\}\{I(e_{ik}<0)-\tau_{k}\} \\
&\hspace{17mm}+\int_{0}^{q_{1ik}(u)+q_{2ik}(u)}\{I(e_{ik}\le s)-I(e_{ik}\le 0)\}ds\Big]\\
&=-u'T_{n}+\Pi_{1n}(u)+\Pi_{2n}(u)+\Pi_{3n}(u),
\end{align}
where
\begin{eqnarray}
&&T_{n}=\sum_{k=1}^{K}\sum_{i=1}^{n}\frac{\partial Q_{ik}(\bbeta_{0})}{\partial\bbeta}\{\tau_{k}-I(e_{ik}\le 0)\},\notag\\
&&\xi_{ik}(u)=\int_{0}^{q_{1ik}(u)}\{I(e_{ik}\le s)-I(e_{ik}\le 0)\}ds,\notag\\
&&\Pi_{1n}(u)=\sum_{k=1}^{K}\sum_{i=1}^{n}\left[\xi_{ik}(u)-E\left\{\xi_{ik}(u)|\bX_{i}\right\}\right],\notag\\
&&\Pi_{2n}(u)=\sum_{k=1}^{K}\sum_{i=1}^{n}E\left\{\xi_{ik}(u)|\bX_{i}\right\},\notag
\end{eqnarray}
and
\[
\Pi_{3n}(u)=\sum_{k=1}^{K}\sum_{i=1}^{n}\Big[q_{2ik}(u)\left\{I(e_{ik}<0)-\tau_{k}\right\}+\int_{q_{1ik}(u)}^{q_{1ik}(u)+q_{2ik}(u)}\left\{I(e_{ik}\le s)-I(e_{ik}\le 0)\right\}ds\Big].
\]

First, by the central limit theorem, we can show that
\begin{equation}\label{eq:thm3-2}
\frac{1}{\sqrt{n}}T_{n}=\frac{1}{\sqrt{n}}\sum_{k=1}^{K}\sum_{i=1}^{n}\frac{\partial Q_{ik}(\bbeta_{0})}{\partial\bbeta}\{\tau_{k}-I(e_{ik}\le 0)\} \rightarrow N(0, \Omega_0)
\end{equation}
in distribution as $n\rightarrow\infty$.
Note that, from Theorem \ref{theorem1}, $\widehat{u}_n=\widehat{\bbeta}_n-\bbeta_{0}=o_{p}(1)$ and then, by applying Lemma \ref{lem-thm3},
\begin{equation}
\label{expansion}
\begin{split}
S_n(\widehat{\bbeta}_n)&=-\widehat{u}_n^{\prime}T_{n}+\Pi_{1n}(\widehat{u}_n)+\Pi_{2n}(\widehat{u}_n)+\Pi_{3n}(\widehat{u}_n)\\
&=-\widehat{u}_n^{\prime}T_{n}+\frac{1}{2}(\sqrt{n}\widehat{u}_n)^\prime\Omega_{1}(\sqrt{n}\widehat{u}_n)+o_{p}(\sqrt{n}\|\widehat{u}_n\|+n\|\widehat{u}_n\|^{2})\\
&\ge-\|\sqrt{n}\widehat{u}_n\|\left\{\|\frac{1}{\sqrt{n}}T_{n}\|+o_{p}(1)\right\}+n\|\widehat{u}_n\|^{2}\left\{\frac{\lambda_{\min}(\Omega_{1})}{2}+o_{p}(1)\right\},
\end{split}
\end{equation}
where $\lambda_{\min}(\Omega_{1})$ is the minimum eigenvalue of $\Omega_{1}$.
This, together with the fact that $S_n(\widehat{\bbeta}_n)=L_n(\widehat{\bbeta}_n)-L_n(\bbeta_0)\leq 0$, implies that
\begin{equation}\label{eq:thm3-1}
\sqrt{n}\|\widehat{u}_n\|\leq \left\{\frac{\lambda_{\min}(\Omega_{1})}{2}+o_{p}(1)\right\}^{-1}\left\{\|\frac{1}{\sqrt{n}}T_{n}\|+o_{p}(1)\right\} =O_{p}(1).
\end{equation}

Denote $u_n^{\star}=n^{-1}\Omega_{1}^{-1}T_{n}$ and, from \eqref{expansion} and \eqref{eq:thm3-1},
\begin{equation}
S_n(\widehat{\bbeta}_n)=\frac{1}{2}(\sqrt{n}\widehat{u}_n)^{\prime}\Omega_{1}(\sqrt{n}\widehat{u}_n)-(\sqrt{n}\widehat{u}_n)^{\prime}\Omega_{1}(\sqrt{n}u_n^{\star})+o_{p}(1).
\end{equation}
Moreover, since $\sqrt{n}u_n^{\star}=O_p(1)$, equation \eqref{expansion} still holds when $\widehat{u}_n$ is replaced by $u_n^{\star}$, and then
\begin{equation}
S(\bbeta_0+u_n^{\star})=-\frac{1}{2}(\sqrt{n}u_n^{\star})^{\prime}\Omega_{1}(\sqrt{n}u_n^{\star})+o_{p}(1),
\end{equation}
which leads to
\begin{eqnarray}
0\geq S(\widehat{\bbeta}_n)-S(\bbeta_0+u_n^{\star})&=&\frac{1}{2}(\sqrt{n}\widehat{u}_n-\sqrt{n}u_n^{\star})'\Omega_{1}(\sqrt{n}\widehat{u}_n-\sqrt{n}u_n^{\star})+o_{p}(1)\notag\\
&\geq&\frac{\lambda_{\min}(\Omega_{1})}{2}\|\sqrt{n}\widehat{u}_n-\sqrt{n}u_n^{\star}\|^{2}+o_{p}(1).\notag
\end{eqnarray}
As a result, from \eqref{eq:thm3-2},
\[
\sqrt{n}\widehat{u}_n=\sqrt{n}u_n^{\star}+o_p(1)=\Omega_{1}^{-1} n^{-1/2}T_{n} +o_p(1) \rightarrow N(\bm{0},\Omega_{1}^{-1}\Omega_{0}\Omega_{1}^{-1})
\]
in distribution as $n\rightarrow\infty$.
\end{proof}

\begin{lemma}\label{lemma:A1}
	Consider the function of $G(\tau)$ defined in the proof of Lemma \ref{lemma:tukey} with $\theta_{3}<\widetilde{\theta}_3<0$. It holds that, (1) for $\tau>0.5$, $G(\tau)$ is strictly decreasing,  (2) for $\tau<0.5$, $G(\tau)$ is strictly increasing.
\end{lemma}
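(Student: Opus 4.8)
The plan is to reduce the claim to a single sign condition on the logarithmic derivative of $G$ and then to establish that sign after a monotone change of variable. Throughout I would write $a=\widetilde{\theta}_3-1$ and $b=\theta_3-1$, so that the hypothesis $\theta_3<\widetilde{\theta}_3<0$ becomes $b<a<-1<0$, and
\[
G(\tau)=\frac{N(\tau)}{D(\tau)},\qquad N(\tau)=\tau^{a}+(1-\tau)^{a},\quad D(\tau)=\tau^{b}+(1-\tau)^{b}.
\]
Since $N,D>0$ on $(0,1)$, the sign of $G'$ equals that of $(\log G)'=(\log N)'-(\log D)'$, so it suffices to control this single difference. I would also record at the outset the symmetry $G(\tau)=G(1-\tau)$, which is immediate from the definitions of $N$ and $D$; this lets me treat only the region $\tau>1/2$ and obtain part (2) for free, because a function symmetric about $1/2$ that is strictly decreasing on $(1/2,1)$ is automatically strictly increasing on $(0,1/2)$.

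First I would introduce the monotone substitution $u=\tau/(1-\tau)$, which maps $\tau\in(1/2,1)$ bijectively and increasingly onto $u\in(1,\infty)$. Writing $\tau=(1-\tau)u$ gives $N(\tau)=(1-\tau)^{a}(u^{a}+1)$ and $N'(\tau)=a(1-\tau)^{a-1}(u^{a-1}-1)$, so that
\[
(\log N)'(\tau)=\frac{1}{1-\tau}\,m(a),\qquad m(c):=\frac{c\,(u^{c-1}-1)}{u^{c}+1},
\]
and likewise $(\log D)'(\tau)=(1-\tau)^{-1}m(b)$. Hence on $(1/2,1)$ the sign of $(\log G)'$ is exactly the sign of $\Psi(u):=m(a)-m(b)$, and the goal becomes to show $\Psi(u)<0$ for every $u>1$.

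The heart of the argument is then to prove that, for each fixed $u>1$, the map $c\mapsto m(c)$ is strictly decreasing on $(-\infty,0)$; since $a>b$ this yields $m(a)<m(b)$, i.e. $\Psi(u)<0$. I would compute $\partial m/\partial c$ by setting $A=u^{c}\in(0,1)$ and rewriting $m=\tfrac{1}{u}\,c(A-u)/(A+1)$; differentiating in $c$ (using $\partial A/\partial c=A\log u$) and clearing the positive factor $(A+1)^{2}$ reduces the sign of $\partial m/\partial c$ to that of
\[
E(c)=(A-u)(A+1)+c\,(u+1)\,A\log u .
\]
The pleasant point --- and the step I would flag as the one most at risk of turning messy, but which in fact closes cleanly --- is that for $u>1$ and $c<0$ one has $A=u^{c}\in(0,1)$, so $A-u<0$ makes the first summand negative, while $c<0$ together with $u+1,A,\log u>0$ makes the second summand negative; thus $E(c)<0$ and $\partial m/\partial c<0$ throughout. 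Combining the steps, $\Psi(u)<0$ on $(1,\infty)$, hence $(\log G)'<0$ and $G$ is strictly decreasing on $(1/2,1)$, which is part (1); part (2) then follows from the symmetry $G(\tau)=G(1-\tau)$. I expect the only genuine care to be bookkeeping of the exponents $a-1$ and $a$ after the substitution and confirming $A<1<u$, so that both summands of $E$ carry the claimed sign.
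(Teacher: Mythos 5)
Your proof is correct and complete. Note that the paper itself does not contain a proof of this lemma in the main text --- it is explicitly deferred to a supplementary file that is not part of the provided source --- so no direct comparison with the authors' argument is possible; what can be said is that your argument stands on its own. The reduction to the sign of $(\log G)'=(\log N)'-(\log D)'$ is legitimate since $N,D>0$ on $(0,1)$; the symmetry $G(\tau)=G(1-\tau)$ does transfer part (1) into part (2); and the substitution $u=\tau/(1-\tau)$ correctly yields $(\log N)'(\tau)=(1-\tau)^{-1}m(a)$ with $m(c)=c\,(u^{c-1}-1)/(u^{c}+1)$, and likewise for $D$ with exponent $b$. The key computation also checks out: writing $A=u^{c}$, one has $m=\tfrac{1}{u}\,c(A-u)/(A+1)$, and the quotient rule gives
\begin{equation*}
\frac{\partial m}{\partial c}=\frac{1}{u}\cdot\frac{(A-u)(A+1)+c\,(u+1)A\log u}{(A+1)^{2}},
\end{equation*}
so the sign is that of $E(c)=(A-u)(A+1)+c\,(u+1)A\log u$. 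For $u>1$ and $c<0$ one indeed has $0<A<1<u$ and $\log u>0$, so both summands are strictly negative, giving $\partial m/\partial c<0$; since $b<a<-1<0$, this forces $m(a)<m(b)$, hence $G'<0$ on $(1/2,1)$, and symmetry finishes part (2). One small point worth making explicit if you write this up: strictness holds at every $\tau\in(1/2,1)$ because $u>1$ strictly there (at $\tau=1/2$ the derivative vanishes, which is consistent with the lemma's open half-intervals). Your monotonicity-in-the-exponent lemma is also slightly stronger than needed (it covers all $c<0$, not just $c<-1$), which is harmless and arguably makes the structure cleaner.
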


\begin{lemma}\label{lem-thm3}
	Suppose that the conditions of Theorem \ref{theorem2} hold. For any sequence of random variables $\{u_n\}$ with $u_{n}=o_{p}(1)$, it holds that
	\begin{itemize}
		\item [(a)] $\Pi_{1n}(u_{n})=o_{p}(\sqrt{n}\|u_{n}\|+n\|u_{n}\|^{2})$,
		\item [(b)] $\Pi_{2n}(u_{n})=(1/2)(\sqrt{n}u_{n})'\Omega_{1}(\sqrt{n}u_{n})+o_{p}(n\|u_{n}\|^{2})$, and
		\item [(c)] $\Pi_{3n}(u_{n})=o_{p}(n\|u_{n}\|^{2})$,
	\end{itemize}
where $\Pi_{1n}(u)$, $\Pi_{2n}(u)$ and $\Pi_{3n}(u)$ are defined in the proof of Theorem \ref{theorem2}.
\end{lemma}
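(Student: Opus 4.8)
The plan is to treat the three parts separately, all resting on the conditional law of the residuals $e_{ik}=Y_i-Q_{ik}(\bbeta_0)$ given $\bX_i$, whose conditional distribution function at $s$ is $F_Y(Q_{ik}(\bbeta_0)+s\mid\bX_i)$, with $F_Y(Q_{ik}(\bbeta_0)\mid\bX_i)=\tau_k$. First I would dispatch part (b), the deterministic backbone. Conditioning inside the integral defining $\xi_{ik}$ gives
\[E\{\xi_{ik}(u)\mid\bX_i\}=\int_0^{q_{1ik}(u)}\{F_Y(Q_{ik}(\bbeta_0)+s\mid\bX_i)-\tau_k\}\,ds.\]
A first-order Taylor expansion of $F_Y$ around $s=0$, valid since $f_Y$ is bounded and continuous (Assumption \ref{assum:density}), turns this into $\tfrac12 f_Y\{Q_{ik}(\bbeta_0)\mid\bX_i\}\,q_{1ik}(u)^2$ plus a remainder that is $o(q_{1ik}(u)^2)$ as $q_{1ik}(u)\to0$. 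Summing over $i$ and $k$, writing $q_{1ik}(u)^2=u'\tfrac{\partial Q_{ik}}{\partial\bbeta}\tfrac{\partial Q_{ik}}{\partial\bbeta'}u$ and applying the law of large numbers (the summands being integrable by Assumption \ref{assum:EE}) identifies the leading term with $\tfrac12(\sqrt n\,u)'\Omega_1(\sqrt n\,u)$, while both the Taylor remainder and the LLN fluctuation $\|\widehat\Omega_{1,n}-\Omega_1\|$ contribute $o_p(n\|u_n\|^2)$ once evaluated at $u_n=o_p(1)$.

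For part (a) the summands $\xi_{ik}(u)-E\{\xi_{ik}(u)\mid\bX_i\}$ are conditionally centered and independent across $i$. The key quantitative input is the variance bound $E\{\xi_{ik}(u)^2\mid\bX_i\}\lesssim|q_{1ik}(u)|^3$, which holds because the integrand $I(e_{ik}\le s)-I(e_{ik}\le0)$ vanishes unless $e_{ik}$ lies between $0$ and $s$, an event of conditional probability $O(|q_{1ik}(u)|)$, so that $\mathrm{Var}\{\Pi_{1n}(u)\}\lesssim n\|u\|^3$ for fixed $u$ (using Assumption \ref{assum:EE} to bound $E|q_{1ik}(u)|^3$). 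This yields $\Pi_{1n}(u)=O_p(\sqrt n\,\|u\|^{3/2})=o_p(\sqrt n\,\|u\|)$ at any fixed deterministic $u\to0$; the difficulty is that $u_n$ is random, so I need a uniform statement over a shrinking ball. I would obtain this by the bracketing/maximal-inequality device of \cite{pollard1985new}: partition $\{\|u\|\le\delta\}$ into dyadic shells $\{2^{-j-1}\delta<\|u\|\le 2^{-j}\delta\}$, bound $\sup$ of $|\Pi_{1n}(u)|$ on each shell at the same $O_p(\sqrt n\,\|u\|^{3/2})$ rate, and sum the resulting geometric series (peeling). Since $\|u_n\|=o_p(1)$, $u_n$ eventually falls in these shells and $\Pi_{1n}(u_n)=o_p(\sqrt n\|u_n\|+n\|u_n\|^2)$ follows.

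For part (c) I would split $\Pi_{3n}$ into the indicator-weighted piece $\sum_{k,i}q_{2ik}(u)\{I(e_{ik}<0)-\tau_k\}$ and the integral remainder $\sum_{k,i}\int_{q_{1ik}(u)}^{q_{1ik}(u)+q_{2ik}(u)}\{I(e_{ik}\le s)-I(e_{ik}\le0)\}\,ds$. In the first piece $I(e_{ik}<0)-\tau_k$ is conditionally mean-zero, so after replacing the Hessian at the intermediate point $\bbeta^*$ by the Hessian at $\bbeta_0$ (the difference vanishing as $\bbeta^*\to\bbeta_0$, with domination supplied by the second moment in Assumption \ref{assum:EE}), the term is a sum of conditionally centered independent variables of per-term variance $O(\|u\|^4)$, hence $O_p(\sqrt n\,\|u\|^2)=o_p(n\|u\|^2)$. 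In the integral remainder the integration range has length $|q_{2ik}(u)|=O(\|u\|^2)$ and the integrand is nonzero only when $|e_{ik}|\lesssim\|u\|$, an event of conditional probability $O(\|u\|)$; this makes the conditional mean $O(\|u\|^3)$ per term, summing to $o(n\|u\|^2)$, while the fluctuation is $O_p(\sqrt n\,\|u\|^{5/2})=o_p(n\|u\|^2)$ by the same variance bookkeeping.

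Since $\bbeta^*$ and hence $q_{2ik}(u_n)$ depend on the random $u_n$, each pointwise rate in part (c)—as in part (a)—must be upgraded to a uniform bound over a shrinking ball via the peeling argument. I expect this stochastic-equicontinuity step, forced on us by the non-convex and non-differentiable objective, to be the main obstacle throughout, since it is precisely here that one cannot simply invoke the central limit theorem or dominated convergence at the data-dependent point $u_n$.
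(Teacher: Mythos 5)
Your proposal is correct and follows essentially the same route the paper takes: the decomposition into $\Pi_{1n}$, $\Pi_{2n}$, $\Pi_{3n}$ with a conditional-CDF Taylor expansion for $\Pi_{2n}$, conditional-variance bounds of order $\|u\|^{3}$ (resp.\ $\|u\|^{4}$, $\|u\|^{5}$) for the centered terms under the moment conditions of Assumption \ref{assum:EE}, and the bracketing/peeling device of \cite{pollard1985new} to upgrade the pointwise rates to uniformity over a shrinking ball so that evaluation at the random point $u_{n}$ is legitimate --- precisely the tool the paper says it adopts for this lemma. The two places where you are looser than a full proof would need to be (a truncation argument in the $\Pi_{2n}$ Taylor remainder, since $\|\partial Q_{ik}(\bbeta_{0})/\partial\bbeta\|$ is unbounded so the remainder is not uniformly $o(q_{1ik}^{2})$ across $i$, and the actual construction of brackets within each dyadic shell) are standard technicalities rather than gaps in the approach.
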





\subsection{Proofs for high-dimensional regularized estimation}

This subsection first conducts deterministic analysis at Lemma \ref{deterministic-analysis}, and then  stochastic analysis at Lemmas \ref{infty} and \ref{RSC}. The proof of Theorem \ref{high-dim} follows from the deterministic analysis in Lemma \ref{deterministic-analysis} and stochastic analysis in Lemmas \ref{infty}, and \ref{RSC}.

We first treat the observed data, $\{(Y_i,\bm X_i^{\prime})^{\prime},i=1,...,n\}$, to be deterministic.
Consider the loss function $L_n(\bbeta)=\sum_{i=1}^{n}\sum_{k=1}^{K}\rho_{\tau_{k}}\{Y_{i}-Q(\tau_{k},\btheta(\bm X_i,\bbeta))\}$, and its first-order Taylor-series error
\[
\mathcal{E}_n(\Delta) = n^{-1}L_n(\bbeta_0+\Delta) -n^{-1}L_n(\bbeta_0) -\langle n^{-1}\nabla L_n(\bbeta_0),\Delta\rangle,
\]
where $\Delta\in\mathbb{R}^{dp}$, $\langle \cdot,\cdot\rangle$ is the inner product, $e_{ik}=Y_{i}-Q(\tau_{k},\btheta(\bm X_i,\bbeta_0))$, $\psi_{\tau}(u)=\tau-I(u<0)$, and $\nabla L_n(\bbeta)=\sum_{i=1}^{n}\sum_{k=1}^{K} \psi_{\tau}(e_{ik}) {\partial Q(\tau_{k},\btheta(\bX,\bbeta))}/{\partial\bbeta}$ is a subgradient of $L_n(\bbeta)$.

\begin{definition}\label{def:LRSC}
	Loss function $L_n(\cdot)$ satisfies the local restricted strong convexity (LRSC) condition if
	\[
	\mathcal{E}_n(\Delta) \geq \alpha \|\Delta\|_2^2- \eta \sqrt{\frac{\log p}{n}} \|\Delta\|_1 \hspace{2mm}\text{for all $\Delta$ such that} \hspace{2mm} 0<r \leq \|\Delta\|_2\leq R,
	\]
	where $\alpha,\eta>0$, and $\|\cdot\|_1$ and $\|\cdot\|_2$ are $\ell_1$ and $\ell_2$ norms, respectively.
\end{definition}

The above LRSC condition has a larger tolerance term compared with that in \cite{loh2015regularized}, which has a form of $(\log p/n)\|\Delta\|_{1}^{2}$. Similar tolerance term can also be found in \cite{fan2019generalized} for high-dimensional generalized trace regression.
It is ready to establish an upper bound for estimation errors when the penalty $\lambda$ is appropriately selected.

\begin{lemma}\label{deterministic-analysis}
	Suppose that the regularizer $p_{\lambda}(\cdot)$ satisfies Assumption \ref{regular}, and loss function $L_n(\cdot)$ satisfies the LRSC condition with $\alpha>\mu/4$ and $r=\frac{12\eta\sqrt{s}}{(4\alpha-\mu)L}\sqrt{\frac{\log p}{n}}$. If the tuning parameter $\lambda$ satisfies that
	\begin{equation}
	\lambda\ge\frac{4}{L}\max\left\{\|n^{-1}\nabla L(\bbeta_{0})\|_{\infty},\eta\sqrt{\frac{\log p}{n}}\right\},
	\end{equation}
	then the minimizer $\widetilde{\bbeta}_n$ over the set of $\Theta=\mathcal{B}_{R}(\bbeta_{0})$ satisfies the error bounds
	\begin{align}
	\|\widetilde{\bbeta}_n-\bbeta_{0}\|_{2}\le\frac{6\sqrt{s}\lambda L}{4\alpha-\mu}~~~~~\mbox{and}~~~~~\|\widetilde{\bbeta}_n-\bbeta_{0}\|_{1}\le\frac{24s\lambda L}{4\alpha-\mu}.
	\end{align}
\end{lemma}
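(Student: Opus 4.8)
The plan is to follow the deterministic optimization argument of \cite{loh2015regularized}, adapted to the additive $\ell_1$-type tolerance of the LRSC condition in Definition \ref{def:LRSC}. Write $\widehat\Delta=\widetilde\bbeta_n-\bbeta_0$ and let $S=\mathrm{supp}(\bbeta_0)$, so $|S|\le s$. Since $\bbeta_0$ is the center of $\Theta=\mathcal B_R(\bbeta_0)$ it is feasible, so optimality of $\widetilde\bbeta_n$ in \eqref{program2} gives $n^{-1}L_n(\widetilde\bbeta_n)+\rho_\lambda(\widetilde\bbeta_n)\le n^{-1}L_n(\bbeta_0)+\rho_\lambda(\bbeta_0)$, where $\rho_\lambda(\bbeta)=\sum_j p_\lambda(\bbeta_j)$. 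Substituting the first-order Taylor error $\mathcal E_n(\widehat\Delta)$ turns this into the master inequality
\[
\mathcal E_n(\widehat\Delta)\le -\langle n^{-1}\nabla L_n(\bbeta_0),\widehat\Delta\rangle+\rho_\lambda(\bbeta_0)-\rho_\lambda(\widetilde\bbeta_n),
\]
and everything else amounts to bounding its three pieces.

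Second, I would control the right-hand side using Assumption \ref{regular} and the choice of $\lambda$. The linear term is handled by H\"older's inequality, $|\langle n^{-1}\nabla L_n(\bbeta_0),\widehat\Delta\rangle|\le\|n^{-1}\nabla L_n(\bbeta_0)\|_{\infty}\|\widehat\Delta\|_1\le(\lambda L/4)\|\widehat\Delta\|_1$, using $\lambda\ge(4/L)\|n^{-1}\nabla L(\bbeta_0)\|_{\infty}$. For the penalty difference I would use two consequences of $\mu$-amenability: on the coordinates in $S$, the bound $p_\lambda(t)\le\lambda L|t|$ combined with subadditivity of $p_\lambda$ (both following from $p_\lambda(t)/t$ being nonincreasing); and on $S^c$, the convex lower bound $p_\lambda(t)\ge\lambda L|t|-\tfrac{\mu}{2}t^2$, which follows from convexity of $p_\lambda(t)+\tfrac{\mu}{2}t^2$ together with $\lambda L\,\mathrm{sign}(t)$ lying in its subdifferential at $0$. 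Since $\widehat\Delta$ vanishes nowhere constraint on $S^c$ means $\widetilde\bbeta_{n}=\widehat\Delta$ there, these combine to give $\rho_\lambda(\bbeta_0)-\rho_\lambda(\widetilde\bbeta_n)\le\lambda L\|\widehat\Delta_S\|_1-\lambda L\|\widehat\Delta_{S^c}\|_1+\tfrac{\mu}{2}\|\widehat\Delta\|_2^2$.

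Third, on the annulus $r\le\|\widehat\Delta\|_2\le R$ (the upper bound being automatic from feasibility, the lower bound being what makes Definition \ref{def:LRSC} applicable), I would insert $\mathcal E_n(\widehat\Delta)\ge\alpha\|\widehat\Delta\|_2^2-\eta\sqrt{\log p/n}\,\|\widehat\Delta\|_1$ and use $\eta\sqrt{\log p/n}\le\lambda L/4$ to merge the linear terms. After splitting $\|\widehat\Delta\|_1=\|\widehat\Delta_S\|_1+\|\widehat\Delta_{S^c}\|_1$ this yields an inequality of the form $c\,\|\widehat\Delta\|_2^2\le\tfrac{3\lambda L}{2}\|\widehat\Delta_S\|_1-\tfrac{\lambda L}{2}\|\widehat\Delta_{S^c}\|_1$ with an effective curvature $c$ governed by $4\alpha-\mu>0$ (the $\tfrac{\mu}{2}\|\widehat\Delta\|_2^2$ being absorbed into $\alpha\|\widehat\Delta\|_2^2$). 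Nonnegativity of the left-hand side forces the cone condition $\|\widehat\Delta_{S^c}\|_1\le 3\|\widehat\Delta_S\|_1$; feeding $\|\widehat\Delta_S\|_1\le\sqrt s\,\|\widehat\Delta\|_2$ back in and cancelling one factor of $\|\widehat\Delta\|_2$ gives $\|\widehat\Delta\|_2\lesssim\sqrt s\,\lambda L/(4\alpha-\mu)$, and the cone converts this to the $\ell_1$ bound via $\|\widehat\Delta\|_1\le 4\|\widehat\Delta_S\|_1\le 4\sqrt s\,\|\widehat\Delta\|_2$.

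Finally, the regime $\|\widehat\Delta\|_2<r$ must be treated separately, since there Definition \ref{def:LRSC} is vacuous. Here I would check directly from $\lambda\ge(4\eta/L)\sqrt{\log p/n}$ that the prescribed $r$ is already dominated by the target $\ell_2$ bound, so the $\ell_2$ conclusion is immediate, and propagate the cone condition (hence the $\ell_1$ bound) into this regime by the standard rescaling argument on the star-shaped feasible set. I expect this threshold bookkeeping, together with the precise absorption of the amenability deficit $\tfrac{\mu}{2}\|\widehat\Delta\|_2^2$ so that the denominator comes out exactly as $4\alpha-\mu$ under the normalization of $\mu$ in Assumption \ref{regular}, to be the main obstacle; the remaining steps are H\"older, Cauchy--Schwarz, and the convexity and subadditivity facts for $p_\lambda$.
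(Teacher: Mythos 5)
Your overall strategy coincides with the paper's (the Loh--Wainwright deterministic argument: feasibility and optimality of $\widetilde{\bbeta}_n$, H\"older's inequality on the gradient term, amenability of $p_{\lambda}$, the LRSC bound on the annulus $r\le\|\widehat{\Delta}\|_2\le R$, a cone condition, and the observation that the regime $\|\widehat{\Delta}\|_2<r$ is immediate because $r\le 3\sqrt{s}\lambda/(4\alpha-\mu)$). However, your penalty bookkeeping does not deliver the stated constants, and this is a genuine gap, not a cosmetic one. Splitting the penalty by the true support $S$ and lower-bounding it on $S^c$ via $p_{\lambda}(t)\ge\lambda L|t|-\tfrac{\mu}{2}t^{2}$ costs an amenability deficit of $\tfrac{\mu}{2}\|\widehat{\Delta}\|_{2}^{2}$, i.e. the deficit is incurred at \emph{full} strength. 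Feeding your bounds into the LRSC inequality gives
\begin{equation*}
\Bigl(\alpha-\frac{\mu}{2}\Bigr)\|\widehat{\Delta}\|_{2}^{2}\;\le\;\frac{3\lambda L}{2}\|\widehat{\Delta}_{S}\|_{1}-\frac{\lambda L}{2}\|\widehat{\Delta}_{S^{c}}\|_{1},
\end{equation*}
so the effective curvature is $\alpha-\mu/2$, not $\alpha-\mu/4$: your route yields error bounds with denominator $4\alpha-2\mu$, and it requires $\alpha>\mu/2$ both for the $\ell_2$ bound and for the nonnegativity of the left-hand side that produces the cone condition. Since the lemma only assumes $\alpha>\mu/4$, your argument collapses entirely on the admissible range $\mu/4<\alpha\le\mu/2$, and even when $\alpha>\mu/2$ it proves a strictly weaker bound than the one stated. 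You correctly anticipated that "the precise absorption of the amenability deficit" would be the main obstacle, but the proposal does not resolve it, and it cannot be resolved within your support-splitting scheme by rebalancing constants.

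The paper's proof avoids this by never applying the amenability lower bound to the penalty itself. It keeps the difference $p_{\lambda}(\bbeta_{0})-p_{\lambda}(\widetilde{\bbeta}_n)$ intact, applies amenability only to the merged linear term and at half strength, $\tfrac{\lambda L}{2}\|\widehat{\Delta}\|_{1}\le\tfrac{1}{2}\{p_{\lambda}(\widehat{\Delta})+\tfrac{\mu}{2}\|\widehat{\Delta}\|_{2}^{2}\}$, so the deficit is only $\tfrac{\mu}{4}\|\widehat{\Delta}\|_{2}^{2}$; subadditivity, $p_{\lambda}(\widehat{\Delta})\le p_{\lambda}(\widetilde{\bbeta}_n)+p_{\lambda}(\bbeta_{0})$, then gives $(\alpha-\mu/4)\|\widehat{\Delta}\|_{2}^{2}\le\tfrac{3}{2}p_{\lambda}(\bbeta_{0})-\tfrac{1}{2}p_{\lambda}(\widetilde{\bbeta}_n)$. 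Finally, Lemma 5 of \cite{loh2015regularized} converts the right-hand side back to $\ell_1$ norms, $0\le 3p_{\lambda}(\bbeta_{0})-p_{\lambda}(\widetilde{\bbeta}_n)\le\lambda L(3\|\widehat{\Delta}_{A}\|_{1}-\|\widehat{\Delta}_{A^{c}}\|_{1})$ with $A$ the index set of the $s$ largest elements in magnitude, which simultaneously yields the $\ell_2$ bound with denominator $4\alpha-\mu$ and the cone condition $\|\widehat{\Delta}_{A^{c}}\|_{1}\le 3\|\widehat{\Delta}_{A}\|_{1}$, hence $\|\widehat{\Delta}\|_{1}\le 4\sqrt{s}\|\widehat{\Delta}\|_{2}$. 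Note that patching your scheme by applying the half-strength inequality to $\tfrac{\lambda L}{2}\|\widehat{\Delta}_{S^{c}}\|_{1}$ and retaining $-\tfrac{1}{2}p_{\lambda}(\widehat{\Delta}_{S^{c}})$ would recover the $\ell_2$ constant, but the conversion of $p_{\lambda}(\widehat{\Delta}_{S^{c}})$ back to $\|\widehat{\Delta}_{S^{c}}\|_{1}$ then reintroduces a $\mu$-dependent term and degrades the $\ell_1$ constant beyond $24$, so the paper's formulation via penalty differences is the cleaner fix.
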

\begin{proof}[{Proof.}]
	Denote $\widetilde{\Delta}_n=\widetilde{\bbeta}_n-\bbeta_{0}$, and it holds that $\|\widetilde{\Delta}_n\|_2\leq R$.
	Note that this lemma naturally holds if $\|\widetilde{\Delta}_n\|_{2}\le (3\sqrt{s}\lambda)/(4\alpha-\mu)$.
	As a result, we only need to consider the case with $(3\sqrt{s}\lambda)/(4\alpha-\mu)\le \|\widetilde{\Delta}_n\|_{2}\le R$, and it can be verified that
	\begin{equation}
	r=\frac{12\eta\sqrt{s}}{(4\alpha-\mu)L}\sqrt{\frac{\log p}{n}}\le\|\widetilde{\Delta}_n\|_{2}\le R.
	\end{equation}
	
	Note that $n^{-1}L_n(\widetilde{\bbeta}_n)+p_{\lambda}(\widetilde{\bbeta}_n)\leq n^{-1}L_n(\bbeta_{0})+p_{\lambda}(\bbeta_{0})$, and then
	$\mathcal{E}_n(\widetilde{\Delta}_n) \leq p_{\lambda}(\bbeta_{0})-p_{\lambda}(\widetilde{\bbeta}_n) -\langle n^{-1}\nabla L_n(\bbeta_0),\Delta\rangle $. This, together with the LRSC condition and Holder's inequality, implies that
	\begin{align}
	\alpha\|\widetilde{\Delta}_n\|_{2}^{2}\le& p_{\lambda}(\bbeta_{0})-p_{\lambda}(\widetilde{\bbeta}_n)+\left(\eta\sqrt{\frac{\log p}{n}} + \|n^{-1}\nabla L(\bbeta_{0})\|_{\infty}\right)\|\widetilde{\Delta}_n\|_{1}\\
	\leq & p_{\lambda}(\bbeta_{0})-p_{\lambda}(\widetilde{\bbeta}_n)+\frac{\lambda L}{2}\|\widetilde{\Delta}_n\|_{1}\\
	\le&p_{\lambda}(\bbeta_{0})-p_{\lambda}(\widetilde{\bbeta}_n)+\frac{1}{2}\left\{p_{\lambda}(\widetilde{\bbeta}_n-\bbeta_{0})+\frac{\mu}{2}\|\widetilde{\Delta}_n\|_{2}^{2}\right\}\\
	\le&p_{\lambda}(\bbeta_{0})-p_{\lambda}(\widetilde{\bbeta}_n)+\frac{1}{2}\left\{p_{\lambda}(\widetilde{\bbeta}_n)+p_{\lambda}(\bbeta_{0})+\frac{\mu}{2}\|\widetilde{\Delta}_n\|_{2}^{2}\right\},
	\end{align}
	where the last inequality follows from the subadditivity of $p_{\lambda}(\cdot)$, while the penultimate inequality is by Assumption \ref{regular}; see also Lemma 4 in \cite{loh2015regularized}.
	As a result,
	\begin{align}\label{3p-p-1}
	0< \left(\alpha-\frac{\mu}{4}\right)\|\widetilde{\Delta}_n\|_{2}^{2}\le \frac{3}{2}p_{\lambda}(\bbeta_{0})-\frac{1}{2}p_{\lambda}(\widetilde{\bbeta}_n).
	\end{align}
	Moreover, from Lemma 5 in \cite{loh2015regularized}, it holds that
	\begin{align}\label{3p-p-2}
	0\le 3p_{\lambda}(\bbeta_{0})-p(\widetilde{\bbeta}_n)\le \lambda L(3\|(\widetilde{\Delta}_n)_{A}\|_{1}-\|(\widetilde{\Delta}_n)_{A^{c}}\|_{1}),
	\end{align}
	where $A$ is the index set of the $s$ largest elements of $\widetilde{\bbeta}_n$ in magnitude.
	Combining \eqref{3p-p-1} and \eqref{3p-p-2}, we have
	\begin{align}
	\left(\alpha-\frac{\mu}{4}\right)\|\widetilde{\Delta}_n\|_{2}^{2}\le \frac{3\lambda L}{2}\|(\widetilde{\Delta}_n)_{A}\|_{1}\le \frac{3\lambda L\sqrt{s}}{2}\|\widetilde{\Delta}_n\|_{2}.
	\end{align}
	As a result,
	\begin{align}
	\|\widetilde{\Delta}_n\|_{2}\le\frac{6\sqrt{s}\lambda L}{4\alpha-\mu}.
	\end{align}
	
	It is also implied by \eqref{3p-p-2} that $\|(\widetilde{\Delta}_n)_{A^{c}}\|_{1}
	\leq 3\|(\widetilde{\Delta}_n)_{A}\|_{1}$, which leads to
	\begin{equation}
	\|\widetilde{\Delta}_n\|_{1}\le \|(\widetilde{\Delta}_n)_{A}\|_{1}+\|(\widetilde{\Delta}_n)_{A^{c}}\|_{1}\le 4\|(\widetilde{\Delta}_n)_{A}\|_{1}\le 4\sqrt{s}\|\widetilde{\Delta}_n\|_{2}.
	\end{equation}
	This accomplishes the proof of this lemma.
\end{proof}

We next conduct the stochastic analysis to verify that the ``good" event and LRSC condition hold with high probability in Lemmas \ref{infty} and \ref{RSC}, respectively.

\begin{lemma}\label{infty}
	If Assumption \ref{Cond-qf} holds, then
	\begin{equation}\label{def-Cs}
	\|n^{-1}\nabla L(\bbeta_{0})\|_{\infty}\le C_{S}\sqrt{\frac{\log p}{n}}
	\end{equation}
	with probability at least $1-c_{1}p^{-c_{2}}$ for some positive constants $c_{1}$, $c_{2}$ and $C_{S}$.
\end{lemma}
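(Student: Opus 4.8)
The plan is to recognize that $n^{-1}\nabla L_n(\bbeta_0)$ is, coordinate by coordinate, an empirical average of i.i.d., bounded, mean-zero random variables, so that a sub-Gaussian (Hoeffding) concentration inequality followed by a union bound over the $dp$ coordinates delivers the claimed $\ell_\infty$-bound. First I would write out the $m$-th coordinate of the score as
\[
[n^{-1}\nabla L_n(\bbeta_0)]_m = \frac{1}{n}\sum_{i=1}^{n} Z_{i,m}, \qquad Z_{i,m}=\sum_{k=1}^{K}\psi_{\tau_k}(e_{ik})\left[\frac{\partial Q(\tau_k,\btheta(\bX_i,\bbeta_0))}{\partial\bbeta}\right]_m,
\]
where $e_{ik}=Y_i-Q(\tau_k,\btheta(\bX_i,\bbeta_0))$ and $\psi_{\tau_k}(u)=\tau_k-I(u<0)$. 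Since $\{(Y_i,\bX_i^{\prime})^{\prime}\}$ are i.i.d., for each fixed $m$ the variables $\{Z_{i,m}\}_{i=1}^{n}$ are i.i.d., and the task reduces to a one-dimensional concentration bound plus a union bound.

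Next I would establish the two properties that feed into Hoeffding's inequality. For the mean-zero property, note that by the model specification \eqref{Model2} the quantity $Q(\tau_k,\btheta(\bX_i,\bbeta_0))$ is the true conditional $\tau_k$-quantile of $Y_i$ given $\bX_i$; together with continuity of the conditional density (Assumption \ref{assum:density}) this gives $E[I(e_{ik}<0)\mid\bX_i]=P(Y_i<Q(\tau_k,\btheta(\bX_i,\bbeta_0))\mid\bX_i)=\tau_k$, hence $E[\psi_{\tau_k}(e_{ik})\mid\bX_i]=0$. Because the gradient coordinate depends only on $\bX_i$, it follows that $E[Z_{i,m}]=0$. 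For boundedness, $|\psi_{\tau_k}(e_{ik})|\le 1$, and by the chain rule $\partial Q(\tau_k,\btheta(\bX,\bbeta))/\partial\bbeta_j=\{\partial Q(\tau_k,\bgamma)/\partial\gamma_j\}\bX$, so every coordinate of $\partial Q/\partial\bbeta$ is bounded in magnitude by $\max_j|\partial Q/\partial\gamma_j|\cdot\|\bX\|_{\infty}\le\|\partial Q(\tau_k,\bgamma)/\partial\bgamma\|\,\|\bX\|_{\infty}\le L_Q C_X$ by Assumption \ref{Cond-qf}. Consequently $|Z_{i,m}|\le K L_Q C_X=:B$.

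With mean-zero and boundedness in hand, Hoeffding's inequality for bounded i.i.d. summands yields, for every coordinate $m$,
\[
P\left(\left|[n^{-1}\nabla L_n(\bbeta_0)]_m\right|\ge t\right)\le 2\exp\left(-\frac{nt^2}{2B^2}\right),
\]
and a union bound over the $dp$ coordinates gives
\[
P\left(\|n^{-1}\nabla L_n(\bbeta_0)\|_{\infty}\ge t\right)\le 2dp\exp\left(-\frac{nt^2}{2B^2}\right).
\]
Choosing $t=C_S\sqrt{\log p/n}$ with $C_S$ large enough that $C_S^2/(2B^2)\ge c_2+1$ bounds the right-hand side by $2d\,p^{-c_2}=c_1 p^{-c_2}$ with $c_1=2d$, which is exactly \eqref{def-Cs} on the complementary event.

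The calculations are largely routine, and I expect the only delicate points to be the two inputs to the concentration step. The first is the exact mean-zero identity, which genuinely relies on the model's quantile specification together with continuity of the conditional density to rule out atoms at $e_{ik}=0$; without Assumption \ref{assum:density} one would only get an approximate centering. The second is the passage from Assumption \ref{Cond-qf}, which bounds $\partial Q/\partial\bgamma$, to a uniform bound on the coordinates of $\partial Q/\partial\bbeta$, requiring the chain-rule reduction and $\|\bX\|_{\infty}\le C_X$ to control the extra factor of $\bX$. Once these are secured, the sub-Gaussian concentration and union bound are standard and the constants propagate transparently.
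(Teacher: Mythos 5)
Your proof is correct and follows the same overall strategy as the paper's: reduce the $\ell_\infty$-bound to coordinate-wise concentration of a bounded, conditionally mean-zero empirical average, union bound over the $dp$ coordinates, and take $t=C_S\sqrt{\log p/n}$. The one substantive difference is how the sum over the $K$ quantile levels is treated. The paper keeps the $(i,k)$ terms separate, uses conditional sub-Gaussianity of each $\psi_{\tau_k}(e_{ik})$, and bounds the moment generating function by the product $\prod_{i=1}^{n}\prod_{k=1}^{K}E\exp\bigl[\delta n^{-1}\psi_{\tau_k}(e_{ik})\xi_j(\bX_i)X_{li}\bigr]$, yielding a tail exponent of order $nt^2/[K(L_QC_X)^2]$; you instead lump the $k$-sum into a single variable $Z_{i,m}$ bounded by $KL_QC_X$ and apply Hoeffding, yielding exponent of order $nt^2/[K^2(L_QC_X)^2]$. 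Your constant is worse by a factor of order $K$, which is immaterial here since $C_S$, $c_1$, $c_2$ are unspecified; in exchange, your version is actually more rigorous on one point: for fixed $i$ the summands $\psi_{\tau_1}(e_{i1}),\ldots,\psi_{\tau_K}(e_{iK})$ are all functions of the same $(Y_i,\bX_i^{\prime})^{\prime}$ and hence are dependent, so the paper's factorization of the MGF across $k$ is not justified as written, whereas bounding the whole $k$-sum at once sidesteps this. Your explicit verification of $E[\psi_{\tau_k}(e_{ik})\mid\bX_i]=0$ (requiring no atom of the conditional law at the quantile) also makes precise a centering fact the paper uses only implicitly when asserting sub-Gaussianity.
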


\begin{proof}[{Proof.}]
	Note that
	\begin{equation}
	n^{-1}\nabla L(\bbeta_{0})=\frac{1}{n}\sum_{i=1}^{n}\sum_{k=1}^{K} \psi_{\tau_k}(e_{ik}) \frac{\partial Q(\tau_{k},\btheta(\bm X_{i},\bbeta_{0}))}{\partial\bm{\gamma}}\otimes\bm X_{i}.
	\end{equation}
	where $e_{ik}=Y_{i}-Q(\tau_{k},\btheta(\bm X_i,\bbeta_0))$, $\psi_{\tau}(u)=\tau-I(u<0)$, and $\bm X_i=(X_{1i},...,X_{pi})^\prime$.
	For $1\leq j\leq d$, denote $\xi_j(\bm X_i)={\partial Q(\tau_{k},\btheta(\bm X_{i},\bbeta_{0}))}/{\partial{\gamma_j}}$ and, from Assumption \ref{Cond-qf}, it holds that $|\xi_j(\bm X_i)|\leq L_Q$.
	
	It can be verified that, conditional on $\bm X_i$, $\psi_{\tau_k}(e_{ik})$ is sub-Gaussian with the parameter of 0.5, and hence, for any $\delta>0$,
	\[
	E\exp[\delta n^{-1}\psi_{\tau_k}(e_{ik})\xi_j(\bm X_i)X_{li}] \leq E\exp\{[\delta n^{-1}\xi_j(\bm X_i)X_{li}]^2/8\} \leq \exp\{[\delta n^{-1}L_QC_X]^2/8\}.
	\]
	As a result, for each $t>0$, $1\leq j\leq d$ and $1\leq l\leq p$,
	\begin{align}
	P&\left( \frac{1}{n}\sum_{i=1}^{n}\sum_{k=1}^{K} \psi_{\tau_k}(e_{ik}) \xi_j(\bm X_{i}) X_{li} >t\right) \\ &\leq \inf_{\delta>0} \exp(-\delta t) \prod_{i=1}^{n}\prod_{k=1}^{K}E\exp[\delta n^{-1}\psi_{\tau_k}(e_{ik})\xi_j(\bm X_i)X_{li}]\\
	&\leq \inf_{\delta>0}  \exp\left( \frac{n^{-1}K(L_QC_X)^2}{8} \delta^2-\delta t\right) \leq \exp\left( \frac{-2nt^2}{K(L_QC_X)^2}\right),
	\end{align}
	which implies that
	\begin{equation*}
	P\left(\|n^{-1}\nabla L(\bbeta_{0})\|_{\infty}\geq t \right) \leq \exp\left( \frac{-2nt^2}{K(L_QC_X)^2} +\log(2dp)\right).
	\end{equation*}
	By letting $t=C_S\sqrt{\log p/n}$ with $C_S>\sqrt{0.5K}L_QC_X$, we accomplish the proof with $c_1=2d$ and $c_2=2C_S^2/[K(L_QC_X)^2]-1>0$.
	
\end{proof}

\begin{lemma}\label{RSC}
	Suppose that Assumptions \ref{regular}-\ref{Cbeta} hold. Given the sample size $n\ge c'\log p$ for some $c'>0$, it holds that
	\[
	\mathcal{E}_n(\Delta)
	\ge \alpha\|\Delta\|_{2}^{2}-\eta\sqrt{\frac{\log p}{n}}\|\Delta\|_{1} \hspace{2mm}\text{for all $r\leq \|\Delta\|_{2}\leq R$}
	\]
	with probability at least $1-c_{1}p^{-c_{2}}-K\log(\sqrt{dp}r/r_{l})p^{-c^{2}}$ for any $c>1$, where  $\alpha=0.5f_{\min}\lambda_{\min}^{0}$, $\eta=KC_{E}d2^{d+1}+2KL_{Q}C_{X}c+C_{S}$.
\end{lemma}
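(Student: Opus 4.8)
The plan is to split $\mathcal{E}_n(\Delta)$ into its expectation and a mean-zero fluctuation, and to show that the former already delivers the quadratic lower bound $\alpha\|\Delta\|^2$ while the latter can erode it by at most a term proportional to $\sqrt{\log p/n}\,\|\Delta\|_1$. Since $E[n^{-1}\nabla L_n(\bbeta_0)]=0$ by the conditional-quantile property $E[\psi_{\tau_k}(e_{ik})\mid\bX]=0$, one has $\bar{\mathcal{E}}(\Delta):=E[\mathcal{E}_n(\Delta)]=\bar{L}(\bbeta_0+\Delta)-\bar{L}(\bbeta_0)$. First I would record the population strong convexity: by Knight's identity \eqref{knight}, taking conditional expectations replaces the indicator integral by $\int_0^{v_{ik}}[F_Y(Q(\tau_k,\btheta(\bX,\bbeta_0))+s\mid\bX)-\tau_k]\,ds$ with $v_{ik}=Q(\tau_k,\btheta(\bX,\bbeta_0+\Delta))-Q(\tau_k,\btheta(\bX,\bbeta_0))$, and a mean-value expansion of $F_Y$ combined with the uniform lower bounds $f_{\min}$ and $\lambda_{\min}^0$ of Assumption \ref{Cbeta} — whose infima over $\mathcal{B}_{R}(\bbeta_0)$ are tailored to absorb the mean-value point — yields $\bar{\mathcal{E}}(\Delta)\ge 0.5 f_{\min}\lambda_{\min}^0\|\Delta\|^2=\alpha\|\Delta\|^2$ for all $\|\Delta\|\le R$.

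Next I would control the fluctuation $\mathcal{E}_n(\Delta)-\bar{\mathcal{E}}(\Delta)$ uniformly over the shell $\{r\le\|\Delta\|\le R\}$. Reusing the decomposition from the proof of Theorem \ref{theorem2}, the first-order term cancels against the subtracted subgradient, leaving the $\Pi_{1n}$-type centered indicator-integral term and the $\Pi_{3n}$-type term carrying the second-order curvature $q_{2ik}$ of $Q$; the residual linear discrepancy is handled through the event $\|n^{-1}\nabla L(\bbeta_0)\|_{\infty}\le C_S\sqrt{\log p/n}$ supplied by Lemma \ref{infty}, which is where the summand $C_S$ of $\eta$ enters and contributes the $c_1 p^{-c_2}$ factor to the failure probability.

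For the genuinely stochastic term I would proceed in two stages. For a \emph{fixed} $\Delta$, Knight's identity expresses each summand as an integral of indicator differences that is Lipschitz in the quantile increment $v_{ik}$; since $\|\bX\|_{\infty}\le C_X$ and $\|\partial Q(\tau_k,\bgamma)/\partial\bgamma\|\le L_Q$ by Assumption \ref{Cond-qf}, each summand has bounded differences, so (using $n\ge c'\log p$ to keep $\sqrt{\log p/n}$ small) the bounded-difference inequality yields a deviation of order $\sqrt{\log p/n}\,\|\Delta\|_1$ with probability at least $1-p^{-c^2}$, after a H\"older step that converts an $\ell_\infty$ control of the coordinatewise Rademacher averages into the $\|\Delta\|_1$ scaling; this is the origin of the $2KL_QC_X c$ summand of $\eta$ and of the $p^{-c^2}$ factor. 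To upgrade to a bound uniform over the shell I would run a peeling/discretization argument over dyadic scales of $\|\Delta\|$, using symmetrization and the multivariate contraction theorem (legitimate because $Q$ is differentiable in $\bgamma$ and the check-function pieces are Lipschitz) to strip the $d$ nonlinear links; this contraction over the $d$ coordinates produces the $KC_E d 2^{d+1}$ summand of $\eta$, while the number of peels is $\log(\sqrt{dp}\,r/r_l)$ and the union bound over the $K$ levels and the peels delivers the $K\log(\sqrt{dp}\,r/r_l)p^{-c^2}$ term in the stated probability.

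Collecting the pieces on the intersection of these good events gives $\mathcal{E}_n(\Delta)\ge\alpha\|\Delta\|^2-\eta\sqrt{\log p/n}\,\|\Delta\|_1$ throughout the shell, with $\eta=KC_E d 2^{d+1}+2KL_QC_X c+C_S$ and $\alpha=0.5 f_{\min}\lambda_{\min}^0$. The main obstacle is precisely the combination that defeats the Loh--Wainwright template: the check loss is nondifferentiable, so the usual second-order Taylor bookkeeping underlying restricted strong convexity is unavailable, and $Q$ is nonconvex and nonlinear in $\bbeta$. The device that unlocks the argument is to never differentiate $\rho_{\tau_k}$ — instead use Knight's identity to trade differentiability for the Lipschitz indicator-integral representation, and then use the multivariate contraction theorem to reduce the empirical process to a coordinatewise linear Rademacher process whose $\ell_\infty$ size is $O(\sqrt{\log p/n})$, which is what forces the fluctuation to scale with $\|\Delta\|_1$ rather than $\|\Delta\|$.
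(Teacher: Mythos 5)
Your proposal is correct and follows essentially the same route as the paper's proof: population strong convexity $\bar{\mathcal{E}}(\Delta)\ge 0.5f_{\min}\lambda_{\min}^0\|\Delta\|_2^2$ via Knight's identity and Assumption \ref{Cbeta}, the empirical deviation controlled by symmetrization, the multivariate contraction theorem, H\"older with $\|\bX\|_\infty\le C_X$, the bounded-differences inequality, and a dyadic peeling argument, plus Lemma \ref{infty} for the subgradient term, with exactly the paper's constants $\eta=KC_Ed2^{d+1}+2KL_QC_Xc+C_S$ and failure probability $c_1p^{-c_2}+K\log(\sqrt{dp}R/r)p^{-c^2}$. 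The only cosmetic differences are that you phrase the fixed-radius stage as ``fixed $\Delta$'' where the paper takes a supremum over a fixed $\ell_1$-ball before peeling, and your passing reference to the $\Pi_{1n}/\Pi_{3n}$ decomposition of Theorem \ref{theorem2} is not actually used (nor needed) in this argument.
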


\begin{proof}[{Proof.}]
	
	We first show the strong convexity of $\bar{L}(\bbeta)=E[n^{-1}L_n(\bbeta)] =E[\sum_{k=1}^{K}\rho_{\tau_{k}}\{Y-Q(\tau_{k},\btheta(\bX,\bbeta))\}]$.
	Let $Q^*(\bbeta)=(Q(\tau_{1},\btheta(\bm X,\bbeta)),\ldots,Q(\tau_{K},\btheta(\bm X,\bbeta)))^{\prime}$ and, by Taylor expansion,
	\[
	E\|Q^*(\bbeta)-Q^*(\bbeta_0)\|_2^2 =E\left\|\frac{\partial Q^*(\bbeta^*)}{\partial \bbeta}\Delta\right\|_2^2 =\Delta^{\prime}\Omega_{2}(\bbeta^*) \Delta,
	\]
	where $\Delta=\bbeta-\bbeta_0$, and $\bbeta^*$ is between $\bbeta_0$ and $\bbeta$.
	Note that ${\partial \bar{L}(\bbeta_0) }/{\partial \bbeta}=0$ and hence,
	by Knight's identity at \eqref{knight} and Assumption \ref{Cbeta}, it can be verified that
	\begin{align}
	\begin{split}\label{lemma3-eq1}
	&\bar{\mathcal{E}}(\Delta) :=\bar{L}(\bbeta)-\bar{L}(\bbeta_0)-\langle \Delta,{\partial \bar{L}(\bbeta_0) }/{\partial \bbeta}\rangle=\bar{L}(\bbeta)-\bar{L}(\bbeta_{0})\\
	=&E\left(\sum_{k=1}^{K}\int_{0}^{Q(\tau_{k},\btheta(\bm X,\bbeta))-Q(\tau_{k},\btheta(\bm X,\bbeta_{0}))}\{F_{Y|\bm X}(Q(\tau_{k},\btheta(\bm X,\bbeta_{0}))+s)-F_{Y|\bm X}(Q(\tau_{k},\btheta(\bm X,\bbeta_{0})))\}ds\right)\\
	\ge&0.5f_{\min} E\|Q^*(\bbeta)-Q^*(\bbeta_0)\|_2^2 \geq 0.5f_{\min}\lambda_{\min}^{0}\|\Delta\|_{2}^{2}
	\end{split}
	\end{align}
	uniformly for $\{\Delta\in\mathbb{R}^{dp}: \|\Delta\|_2\leq R\}$.
	
	For $1\leq k\leq K$, denote $L_n^{(k)}(\bbeta) =\sum_{i=1}^{n}\rho_{\tau_{k}}\{Y_{i}-Q(\tau_{k},\btheta(\bm X_i,\bbeta))\}$ and $\bar{L}_{k}(\bbeta) =E[n^{-1}L_n^{(k)}(\bbeta)] =E[\rho_{\tau_{k}}\{Y-Q(\tau_{k},\btheta(\bX,\bbeta))\}]$.
	Note that $L_n(\bbeta)=\sum_{k=1}^KL_n^{(k)}(\bbeta)$ and $\bar{L}(\bbeta)=\sum_{k=1}^K\bar{L}_{k}(\bbeta)$.
	Let $\mathcal{E}_{k}^*(\Delta)=|n^{-1}L_n^{(k)}(\bbeta)-n^{-1}L_n^{(k)}(\bbeta_{0})-\{\bar{L}_{k}(\bbeta)-\bar{L}_{k}(\bbeta_{0})\}|$, and we next prove that, uniformly for $r\le \|\Delta\|_{2}\le R$,
	\begin{equation}\label{lemma3-eq2}
	\mathcal{E}_{k}^*(\Delta)\leq C_{\mathcal{E}}\sqrt{\frac{\log p}{n}}\|\Delta\|_{1},
	\end{equation}
	with probability at least $1-\log(\sqrt{dp}r/r_{l})p^{-c^{2}}$ for any $c>1$, where $C_{\mathcal{E}}=C_{E}d2^{d+1}+2L_{Q}C_{X}c$.
	As in Theorem 9.34 in \cite{wainwright2019high}, we use the peeling argument, which is a common strategy in empirical process theory.
	
	\noindent\textit{\textbf{Tail bound for fixed radii:}}
	Define a set $\mathcal{C}(r_{1}):=\left\{\Delta\in\mathbb{R}^{dp}:\|\Delta\|_{1}\le r_{1}\right\}$ for a fixed radii $r_1>0$, and a random variable $A_n(r_1)=r_1^{-1}\sup_{\bbeta\in\mathcal{C}(r_{1})}\mathcal{E}_{k}(\Delta)$. We next show that, for any $t>0$,
	\begin{equation}\label{lemma3-eq3}
	A_n(r_{1})\le C_{E}d2^{d}\sqrt{\frac{\log p}{n}}+L_{Q}C_{X}\sqrt{\frac{t}{n}}.
	\end{equation}
	with probability at least $1-e^{-t}$.

	For $1\leq i\leq n$, denote $\bm{W}_{i}=(Y_i,\bm{X}_i^{\prime})^{\prime}$. Note that random variable $A(r_{1})$ has a form of $f(\bm{W}_{1},\dots,\bm{W}_{n})$, and it is guaranteed by Assumption \ref{Cond-qf} that
	$$\left|f(\bm{W}_{1},\ldots,\bm{W}_{i},\ldots,\bm{W}_{n})-f(\bm{W}_{1},\ldots,\bm{W}_{i\prime},\ldots,\bm{W}_{n})\right|\le n^{-1}L_{Q}C_{X},$$
	i.e., if we replace $\bm W_i$ by $\bm W_{i'}$, while keep other $\bm W_j$ fixed, then $A(r_{1})$ changes by at most $n^{-1}L_Q C_X$.
	As a result, by the bounded differences inequality and for any $t>0$,
	\begin{equation}\label{ineq-bounded-diff}
	A_n(r_{1})\le E[A_n(r_{1})]+L_{Q}C_{X}\sqrt{\frac{t}{n}}
	\end{equation}
	with probability at least $1-e^{-t}$.
	
	In addition, it is implied by Assumption \ref{Cond-qf} that, for all $\bbeta,\widetilde{\bbeta}\in\mathbb{R}^{dp}$,
	\begin{equation*}
	|\rho_{\tau_{k}}\left\{Y_{i}-Q(\tau_{k},\btheta(\bm X_{i},\bbeta))\right\}-\rho_{\tau_{k}}\{Y_{i}-Q(\tau_{k},\btheta(\bm X_{i},\widetilde{\bbeta}))\}|\le L_{Q}\sum_{l=1}^{d}|\bm X_{i}^{\prime}(\bbeta_{l}-\widetilde{\bbeta_{l}})|,
	\end{equation*}
	which leads to
	\begin{equation}
	\begin{split}\label{def-CE}
	E[A_n(r_{1})]\le& \frac{2}{nr_{1}}E\left(\sup_{\bbeta\in\mathcal{C}(r_{1})}\left|\sum_{i=1}^{n}\epsilon_{i}\left[\rho_{\tau_{k}}\{Y_i-Q(\tau_{k},\btheta(\bm X_{i},\bbeta))\}-\rho_{\tau_{k}}\{Y_i-Q(\tau_{k},\btheta(\bm X_{i},\bbeta_0))\}\right]\right|\right)\\
	\le& \frac{C2^{d}}{nr_{1}}E\left(\sup_{\bbeta\in\mathcal{C}(r_{1})}\left|\sum_{l=1}^{d}\sum_{i=1}^{n}V_{il}(\bm X_{i}^{\prime}(\bbeta_{l}-\bbeta_{0l}))\right|\right)
	\le\frac{C2^{d}}{n}E\left(\sum_{l=1}^{d}\left\|\sum_{i=1}^{n}V_{il}\bm X_{i}\right\|_{\infty}\right)\\
	\le&\frac{Cd2^{d}}{n}E\left(\left\|\sum_{i=1}^{n}V_{i1}\bm X_{i}\right\|_{\infty}\right)\le C_{E}d2^{d}\sqrt{\frac{\log p}{n}},
	\end{split}
	\end{equation}
	where $\{\epsilon_{i}\}$ and $\{V_{il}\}$ are $i.i.d.$ Rademacher and standard Gaussian random variables, respectively, the first inequality is due to the symmetrization theorem \cite[Lemma 12]{loh2015regularized}, the second one is by the multivariate contraction theorem \citep[Theorem 16.3]{van2016estimation}, the third one is due to the fact $|\bm X^{\prime}\bbeta|\le \|\bm X\|_{\infty}\|\bbeta\|_{1}$, and the last one is by Lemma 16 of \cite{loh2015regularized} given the sample size of $n\ge c'\log p$ for some $c'>0$.
	The upper bound at \eqref{lemma3-eq3} holds by combining \eqref{ineq-bounded-diff} and \eqref{def-CE}.

	\noindent\textit{\textbf{Extension to uniform radii via peeling:}}
	Define a sequence of sets $\Theta_{l}:=\{\Delta\in\mathbb{R}^{dp}: 2^{l-1}r\le\|\Delta\|_{1}\le 2^{l}r\}$ with $1\leq l\leq N=\log (\sqrt{dp}R/r)$. It can be verified that
	$\{\Delta\in\mathbb{R}^{dp}:r\leq \|\Delta\|_{2}\leq R\} \subseteq\{\Delta\in\mathbb{R}^{dp}:r\leq \|\Delta\|_{1}\leq \sqrt{dp}R\}\subseteq\cup_{l=1}^{N}\Theta_{l}$.
	As a result,
	\begin{align*}
	P&\left(\mathcal{E}_{k}^*(\Delta)\geq C_{\mathcal{E}}\sqrt{\frac{\log p}{n}}\|\Delta\|_{1}, \Delta\in\cup_{l=1}^{N}\Theta_{l} \right)\\
	&\hspace{5mm}\leq \sum_{l=1}^{N}P\left( \mathcal{E}_{k}^*(\Delta)\geq 2^{l-1}rC_{\mathcal{E}}\sqrt{\frac{\log p}{n}}, \Delta \in \Theta_l \right)\\
	&\hspace{5mm}=\sum_{l=1}^{N}P\left( \mathcal{E}_{k}^*(\Delta)\geq (C_{E}d2^{d})(2^{k}r)\sqrt{\frac{\log p}{n}}+L_{Q}C_{X}(2^{k}r)\sqrt{\frac{c^{2} \log p}{n}}, \Delta \in \Theta_l \right)\\
	&\hspace{5mm}\leq \sum_{l=1}^{N}P\left(A(2^{l}r)\ge C_{E}d2^{d}\sqrt{\frac{\log p}{n}}+L_{Q}C_{X}\sqrt{\frac{c^{2}  \log p}{n}}\right),
	\end{align*}	
	where $C_{\mathcal{E}}=C_{E}d2^{d+1}+2L_{Q}C_{X}c$. By applying \eqref{lemma3-eq3}, it holds that
	\[
	P\left(\mathcal{E}_{k}^*(\Delta)\geq C_{\mathcal{E}}\sqrt{\frac{\log p}{n}}\|\Delta\|_{1}, r\leq\|\Delta\|_2\leq R \right) \leq \sum_{k=1}^{N}e^{-c^{2} \log p}=\log(\sqrt{dp}R/r)p^{-c^{2}},
	\]
	i.e. \eqref{lemma3-eq2} holds.
	
	Finally, from \eqref{lemma3-eq1}, \eqref{lemma3-eq2} and Lemma \ref{infty},
	\begin{align*}
	\mathcal{E}_n(\Delta)&\geq \bar{\mathcal{E}}(\Delta) -\sum_{k=1}^K\mathcal{E}_{k}^*(\Delta) -\|n^{-1}\nabla L(\bbeta_{0})\|_{\infty}\|\Delta\|_{1}\\
	&\geq 0.5f_{\min}\lambda_{\min}^{0}\|\Delta\|_{2}^{2}-(KC_{\mathcal{E}}+C_{S})\sqrt{\frac{\log p}{n}}\|\Delta\|_{1}
	\end{align*}	
	uniformly for $r\leq \|\Delta\|_{2}\leq R$ with probability at least $1-c_{1}p^{-c_{2}}-K\log(\sqrt{dp}r/r_{l})p^{-c^{2}}$ for any $c>1$. This accomplishes the proof.
	
\end{proof}

\begin{proof}[Proof of Theorem \ref{high-dim}]
	The proof of this theorem follows from the deterministic analysis in Lemma \ref{deterministic-analysis} and stochastic analysis in Lemmas \ref{infty}, \ref{RSC}.
\end{proof}

\bibliography{refQI}

\newpage

\begin{figure}[ht]   
	\centering
	\subfloat{\includegraphics[width=5cm, height=4cm]{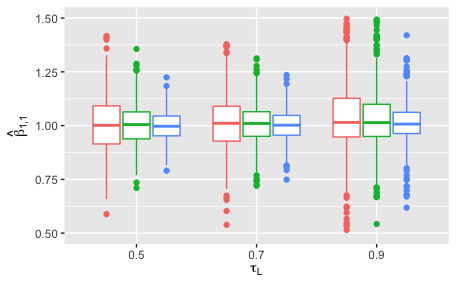}}
	\subfloat{\includegraphics[width=5cm, height=4cm]{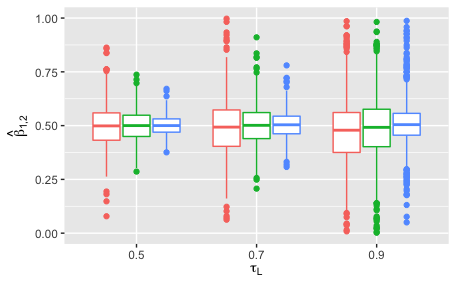}}
	\subfloat{\includegraphics[width=6.5cm, height=4cm]{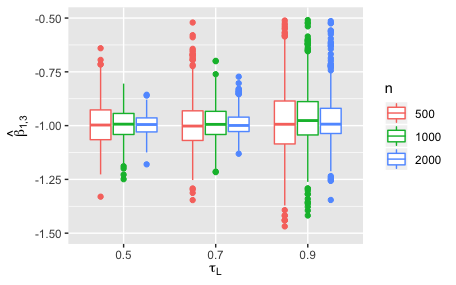}}
	\caption{Boxplots for fitted location parameters of $\widehat{\bbeta}_{1,1}$ (left panel), $\widehat{\bbeta}_{1,2}$ (middle panel), and $\widehat{\bbeta}_{1,3}$ (right panel). Sample size is $n=500$, 1000 or 2000, and the lower bound of quantile range $[\tau_L,\tau_U]$ is $\tau_L=0.5$, 0.7 or 0.9.}
	\label{box:low}
\end{figure}

\begin{figure}[ht]
	\centering
	\includegraphics[width=1\textwidth,height=0.4\textwidth]{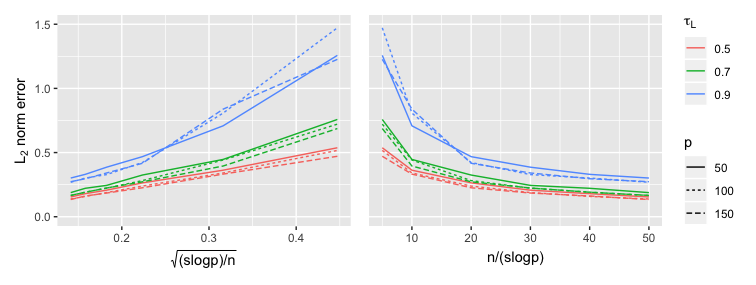}
	\caption{Estimation errors of $\|\widetilde{\bbeta}_n-\bbeta_{0}\|$ against the quantities of $\sqrt{(s\log p)/n}$ (left panel) and $n/{(s\log p)}$ (right panel), respectively.}
	\label{fig:high}
\end{figure}

\begin{table}[ht]
	\centering
	\caption{Mean square errors of the predicted conditonal quantile $Q(\tau^{*},\btheta(\bX,\widehat{\bbeta}_{n}))$ at the level of $\tau^*=0.991$ or $0.995$. The values in bracket refer to the corresponding sample standard deviations of prediction errors in squared loss.}
	\label{tab:low}
	\begin{tabular}{cccccccc}
		\toprule
		&&  &  \multicolumn{2}{c}{$\bm{X}=(1,0.1,-0.2)^T$}  &&    \multicolumn{2}{c}{$\bm{X}=(1,0,0)^T$}  \\ 
		\cmidrule{4-5} \cmidrule{7-8}
		$n$&$[\tau_L,\tau_U]$ &  & 0.991 & 0.995 &  & 0.991 & 0.995 \\ 
		\midrule[1pt]
		& & True & 10.34 & 11.83 &  & 15.13 & 18.84 \\ 		 
		500&$[0.5,0.99]$ &  & 1.32(2.17) & 2.35(4.11) &  & 5.41(11.41) & 12.13(28.19) \\ 
		&$[0.7,0.99]$ &  & 1.42(2.20) & 2.55(4.07) &  & 5.42(10.95) & 12.12(26.73) \\ 
		&$[0.9,0.99]$ &  & 2.00(3.92) & 3.64(7.56) &  & 6.18(12.86) & 14.10(32.78) \\
		\midrule[1pt]
		1000&$[0.5,0.99]$ &  & 0.77(1.68) & 1.39(3.29) &  & 2.67(5.28) & 5.93(12.61) \\ 
		&$[0.7,0.99]$ &  & 0.80(1.39) & 1.44(2.64) &  & 2.62(4.27) & 5.75(9.75) \\
		&$[0.9,0.99]$ &  & 1.31(2.53) & 2.44(5.07) &  & 3.22(5.08) & 7.23(11.78) \\
		\midrule[1pt] 
		2000&$[0.5,0.99]$ &  & 0.32(0.47) & 0.57(0.85) &  & 1.03(1.56) & 2.25(3.49) \\ 
		&$[0.7,0.99]$ &  & 0.36(0.49) & 0.64(0.90) &  & 1.05(1.47) & 2.31(3.24) \\ 
		&$[0.9,0.99]$ &  & 0.70(1.34) & 1.30(2.44) &  & 1.34(1.75) & 3.05(4.06) \\ 
		\bottomrule
	\end{tabular}
\end{table}

\begin{table}[ht]
	\centering
	\caption{Mean square errors of the predicted conditonal quantile $Q(\tau^{*},\btheta(\bX,\widetilde{\bbeta}_{n}))$ at the level of $\tau^*=0.991$ or $0.995$ with $p=50$ and $n=\lfloor ck\log p\rfloor$. The values in bracket refer to the corresponding sample standard deviations of prediction errors in squared loss.}
	\label{tab:high1}
	\begin{tabular}{cccccccc}
		\toprule
		& & &  \multicolumn{2}{c}{$\bm{X}=(1,0.1,-0.2,0,\cdots,0)^T$}  &  & \multicolumn{2}{c}{$\bm{X}=(1,0,0,0,\cdots,0)^T$}   \\ 
		\cmidrule{4-5} \cmidrule{7-8}
		c&$[\tau_L,\tau_U]$ &  & 0.991 & 0.995 &  & 0.991 & 0.995 \\ 
		\midrule[1pt]
		&& True & 10.34 & 11.83 &  & 15.13 & 18.84 \\ 
		10&$[0.5,0.99]$ &  & 1.82(2.61) & 3.23(4.82) &  & 6.75(9.47) & 14.83(21.98) \\ 
		&$[0.7,0.99]$ &  & 2.05(6.00) & 3.78(13.10) &  & 6.11(8.22) & 13.32(18.56) \\
		&$[0.9,0.99]$ &  & 2.92(7.19) & 5.44(15.60) &  & 7.24(10.36) & 15.91(25.05) \\
		\midrule[1pt] 
		30&$[0.5,0.99]$ &  & 0.65(1.59) & 1.15(3.12) &  & 1.97(3.08) & 4.26(6.90) \\ 
		&$[0.7,0.99]$ &  & 0.65(1.49) & 1.18(2.94) &  & 1.96(2.85) & 4.26(6.31) \\ 
		&$[0.9,0.99]$ &  & 0.92(2.15) & 1.66(4.08) &  & 2.22(2.95) & 4.86(6.40) \\
		\midrule[1pt] 
		50&$[0.5,0.99]$ &  & 0.33(0.49) & 0.58(0.84) &  & 1.17(1.77) & 2.52(3.88) \\ 
		&$[0.7,0.99]$ &  & 0.39(0.57) & 0.69(1.01) &  & 1.28(1.93) & 2.78(4.26) \\ 
		&$[0.9,0.99]$ &  & 0.54(0.86) & 0.99(1.58) &  & 1.55(2.39) & 3.51(5.57) \\ 
		\bottomrule
	\end{tabular}
\end{table}

\begin{table}[ht]
	\centering
	\caption{Selection results for regularized estimation with $p=50$ and $n=\lfloor ck\log p\rfloor$. The values in brackets are the corresponding standard errors.}
	\label{tab:high2}
	\begin{tabular}{cccccccc}
		\toprule
		$[\tau_L,\tau_U]$ & c & size & $\textrm{P}_{\textrm{AI}}$ & $\textrm{P}_{\textrm{A}}$ & $\textrm{P}_{\textrm{I}}$ & FP & FN \\
		\midrule[1pt]
		$[0.5,0.99]$	& 10 & 9.04(0.99) & 91.6 & 96 & 95.6 & 0.06(0.68) & 0.47(2.34) \\ 
		& 30 & 9.00(0.00) & 100 & 100 & 100 & 0.00(0.00) & 0.00(0.00) \\ 
		& 50 & 9.00(0.00) & 100 & 100 & 100 & 0.00(0.00) & 0.00(0.00) \\ 
		\midrule[1pt]
		$[0.7,0.99]$	& 10 & 8.91(1.25) & 79 & 82.6 & 95.4 & 0.07(0.84) & 2.02(4.52) \\ 
		& 30 & 9.00(0.08) & 99.4 & 99.6 & 99.8 & 0.00(0.03) & 0.04(0.70) \\
		& 50 & 9.00(0.00) & 100 & 100 & 100 & 0.00(0.00) & 0.00(0.00) \\
		\midrule[1pt]
		$[0.9,0.99]$	& 10 & 8.56(0.99) & 48.4 & 54.6 & 90.4 & 0.10(0.41) & 6.51(8.43) \\
		& 30 & 8.88(0.38) & 87.8 & 88.4 & 99.2 & 0.01(0.06) & 1.42(4.10) \\
		& 50 & 8.96(0.23) & 96.4 & 96.6 & 99.8 & 0.00(0.03) & 0.44(2.50) \\
		\bottomrule
	\end{tabular}
\end{table}

\begin{figure}[ht]
	\centering
	\subfloat{\includegraphics[width=7cm,height=5cm]{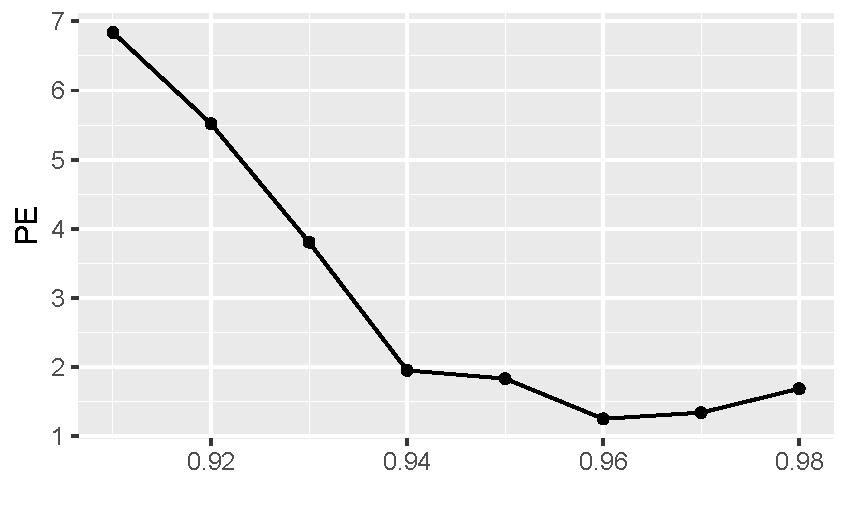}}
	\subfloat{\includegraphics[width=10cm,height=5cm]{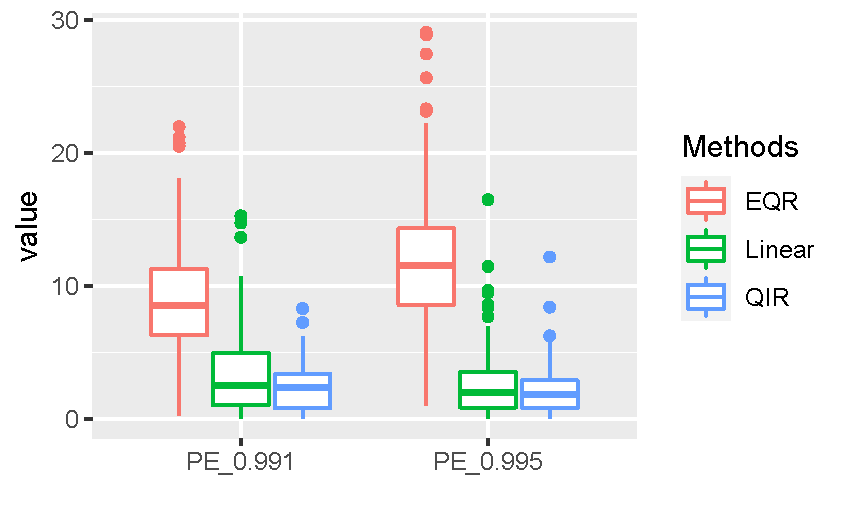}}
	\caption{Plot of PEs against $\tau_L$ (left panel) and boxplots of PEs from the extreme quantile regression (EQR), linear quantile regression (Linear) and QIR models
		at two target levels of $\tau^*=0.991$ and 0.995 (right panel).}\label{fig:compare}
\end{figure}

\begin{figure}[ht]
	\centering
	\includegraphics[height=5cm, width=7cm]{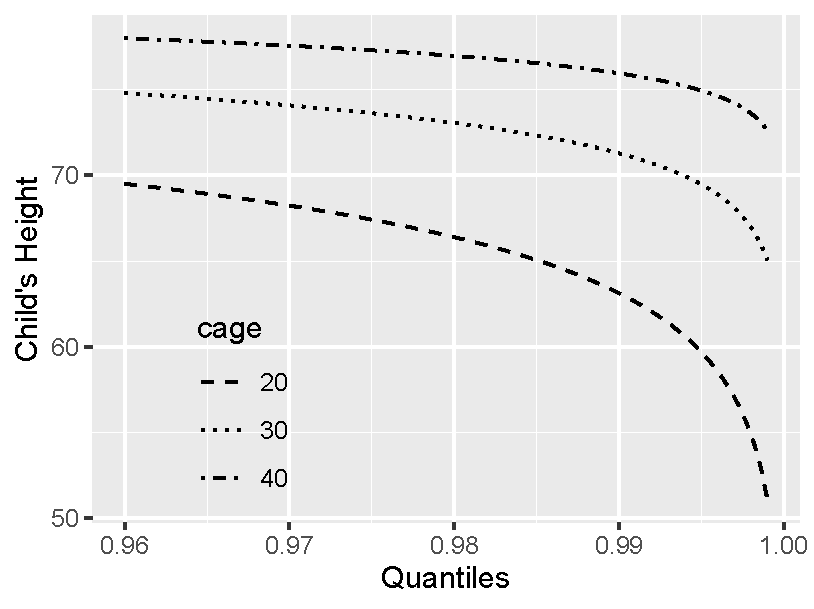} 
	\includegraphics[height=5cm, width=7cm]{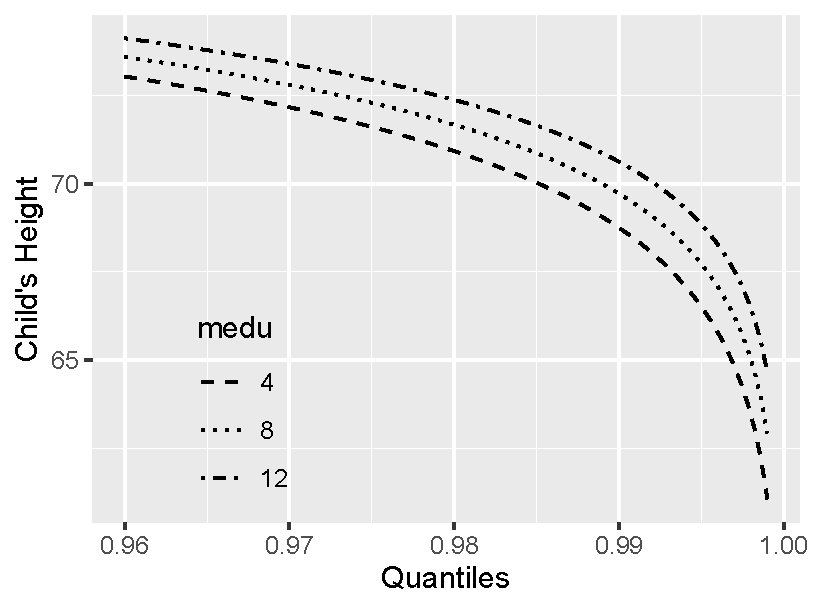}\\
	\includegraphics[height=5cm, width=7cm]{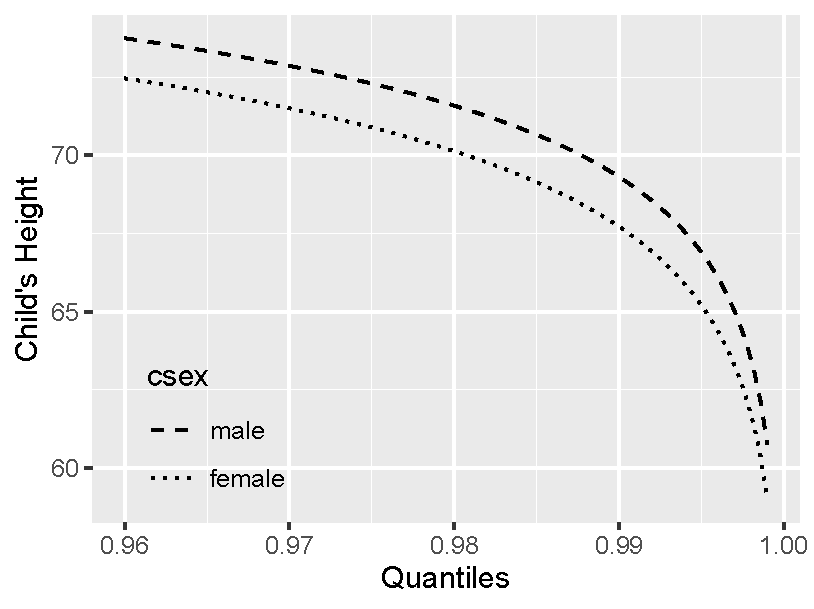} 
	\includegraphics[height=5cm, width=7cm]{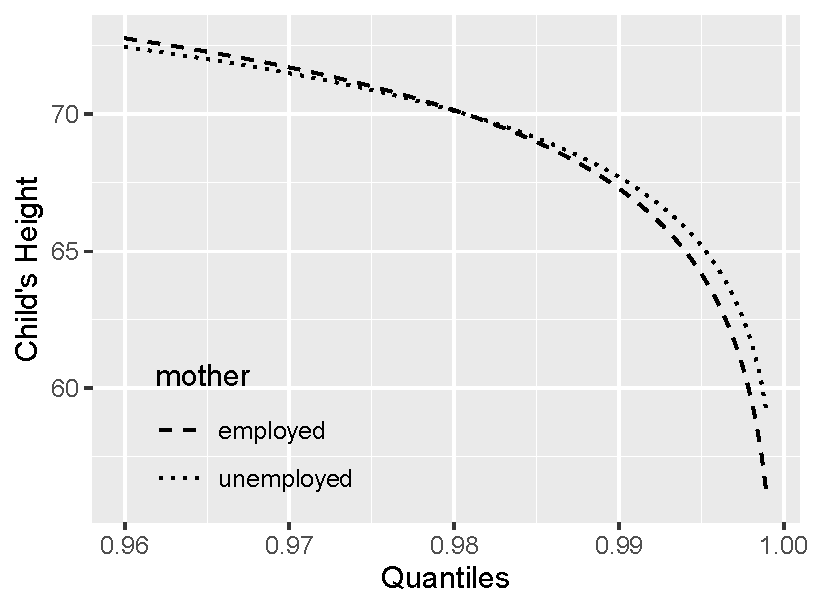}
	\\
	\caption{Quantile curves for child's age in months (top left), mother's education in years (top right) on the three target quantiles. Effects of child's sex (bottom left) and mother's unemployment condition (bottom right).}\label{fig:interpretation}
\end{figure}

\end{document}